\newenvironment{reminder}[1]{\medskip
\noindent {\bf Reminder of #1  }\em}{\medskip}
\newtheorem{theorem}{Theorem}[section]
\newtheorem{corollary}{Corollary}[section]
\newtheorem{definition}{Definition}[section]
\def \MCE {\text{\sc Multipoint Circuit Evaluation}}
\def \size {\text{size}}
\def \NP {{\sf NP}}
\def \NTIME {{\sf NTIME}}
\def\eps{\varepsilon}
\def \PSPACE {{\sf PSPACE}}
\def\poly{\text{poly}}
\def \mult {\text{mult}}
\def \E {\text{\sf E}}
\def \P {\text{\sf P}}
\def \Z {{\mathbb Z}}
\def \R {{\mathbb R}}
\def \F {{\mathbb F}}
\def \MA {{\sf MA}}
\title{Strong ETH Breaks With Merlin and Arthur: \\
 Short Non-Interactive Proofs of Batch Evaluation}
\author{
Ryan Williams\footnote{Computer Science Department, Stanford University. Part of this work was done while visiting the Simons Institute for the Theory of Computing, Berkeley, CA. Supported in part by a David Morgenthaler II Faculty Fellowship, a Sloan Fellowship, and NSF CCF-1212372. Any opinions, findings and conclusions or recommendations expressed in this material are those of the authors and do not necessarily reflect the views of the National Science Foundation.} 
}
\begin{document}
\date{}
\maketitle

\begin{abstract} We present an efficient proof system for {\sc Multipoint Arithmetic Circuit Evaluation}: for any arithmetic circuit $C(x_1,\ldots,x_n)$ of size $s$ and degree $d$ over a field $\F$, and any inputs $a_1,\ldots,a_K \in \F^n$,
\begin{itemize}
\item the Prover sends the Verifier the values $C(a_1), \ldots, C(a_K) \in \F$ and a proof of $\tilde{O}(K \cdot d)$ length, and 
\item the Verifier tosses $\poly(\log(dK|\F|/\eps))$ coins and can check the proof in about $\tilde{O}(K \cdot(n + d) + s)$ time, with probability of error less than $\eps$.
\end{itemize}
For small degree $d$, this ``Merlin-Arthur'' proof system (a.k.a. MA-proof system) runs in nearly-linear time, and has many applications. For example, we obtain MA-proof systems that run in $c^{n}$ time (for various $c < 2$) for the Permanent, $\#$Circuit-SAT for all sublinear-depth circuits, counting Hamiltonian cycles, and infeasibility of $0$-$1$ linear programs. In general, the value of any polynomial in Valiant's class ${\sf VP}$ can be certified faster than ``exhaustive summation'' over all possible assignments. These results strongly refute a Merlin-Arthur Strong ETH and Arthur-Merlin Strong ETH posed by Russell Impagliazzo and others. 

We also give a three-round (AMA) proof system for quantified Boolean formulas running in $2^{2n/3+o(n)}$ time, 
nearly-linear time MA-proof systems for counting orthogonal vectors in a collection and finding Closest Pairs in the Hamming metric, and a MA-proof system running in $n^{k/2+O(1)}$-time for counting $k$-cliques in graphs. 

We point to some potential future directions for refuting the Nondeterministic Strong ETH.

\end{abstract}

\section{Introduction}

Suppose you have a circuit of size $s$ that you want to evaluate on $k$ different inputs. In the worst case, you'd expect and need $O(s \cdot k)$ time to do this yourself. What if you asked a powerful computer to evaluate the circuit for you? The computer may be extremely fast relative to you, and send you the $k$ answers almost immediately. But how can you (quickly) check that the computer used \emph{your} circuit, and didn't just make up the answers? Such ``delegating/verifiable computation'' questions naturally arise in the study of interactive proofs, and have recently seen increased attention in the crypto community
(see~\cite{GoldwasserKR15,GennaroGP10,ChungKV10,ApplebaumIK10,FioreG12,
Thaler13,
KalaiRR14}~for a sample of the different models and goals). 

For circuits with a certain natural structure\footnote{In particular, the proof system works for all arithmetic circuits using addition and multiplication over a finite field, where the resulting polynomial has low degree. A surprising number of functions can be efficiently implemented in this way.}, we show in this paper how a powerful computer can \emph{very} efficiently prove in one shot (with extremely low probability of error) that its answers are indeed the outputs of your circuit. Omitting low-order terms, the proof is about $\tilde{O}(s+k)$ bits long, and takes about $\tilde{O}(s+k)$ time to verify---roughly proportional to the size of the circuit and the $k$ inputs. The proof system is simple and has no nasty hidden constants, low randomness requirements, and many theoretical applications. 

\subsection{Our Results}

Our evaluation result is best phrased in terms of arithmetic circuits over plus and times gates, evaluated over a finite field. We consider the problem of evaluating such a circuit on many inputs in batch:

\begin{definition} The $\MCE$ problem: given an arithmetic circuit $C$ on $n$ variables over a finite field $\F$, and a list of inputs $a_1,\ldots,a_K \in \F^n$, output $\left(C(a_1),\ldots,C(a_K)\right)\in \F^K$.
\end{definition}

An important special case of $\MCE$ is when the arithmetic circuit is a sum of products of variables (a $\Sigma \Pi$ circuit). This version is called {\sc Multivariate Multipoint Evaluation} by Kedlaya and Umans~\cite{KedlayaU11}; they give the best known algorithms for this case, showing how to solve it in about $(d^n+K)^{1+o(1)}\poly(\log m)$ time over $\Z_m$, where $d$ is the degree of each variable and $n$ is the number of variables. 
The simplest instance of multipoint evaluation considers circuits that are a sum of products of \emph{one} variable; this case is well-known to have very efficient algorithms (see Section~\ref{poly-prelims}). However, for more expressive circuits (such as $\Sigma \Pi \Sigma$, sums of products of sums), no significant improvements over the obvious batch evaluation algorithm have been reported.

Our first result is that multipoint evaluation of \emph{general} arithmetic circuits of low degree can be ``delegated'' very efficiently, in a publicly verifiable and non-interactive way:

\begin{theorem} \label{main-multipoint} For every finite field $\F$ and $\eps > 0$, $\MCE$ for $K$ points in $\F^n$ on a circuits of $n$ inputs, $s$ gates, and degree $d$ has an probabilistic verifier $V$ where, for every circuit $C$,
\begin{compactitem}
\item There is a unique proof of $(C(a_1),\ldots,C(a_K))$ that is  $\tilde{O}(K \cdot d)$ bits long\footnote{The $\tilde{O}$ omits polylog factors in $K$, $|\F|$, $d$, $s$, and $1/\eps$.}, and  
\item The proof can be verified by $V$ with access to $C$, $\tilde{O}(1)$ bits of randomness, and $\tilde{O}(K\cdot \max\{d,n\} + s)$ time, such that $(C(a_1),\ldots,C(a_K))$ is output incorrectly with probability at most $\eps$.
\end{compactitem}
\end{theorem}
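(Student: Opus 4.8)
\medskip
\noindent\textbf{Proof plan.}
The plan is to collapse the batch of $K$ evaluations into a \emph{single} evaluation of $C$ by restricting $C$ to a low-degree curve through the $K$ input points, and then certify the resulting univariate polynomial. First reduce to a large field: if $|\F| < 2d(K-1)/\eps$, pass to the extension $\F'$ of $\F$ of least degree with $|\F'| \ge \max\{K,\, 2d(K-1)/\eps\}$, so that each arithmetic operation in $\F'$ costs $\poly\log(dK|\F|/\eps)$ bit operations and a uniform element of $\F'$ is named by $O(\log(dK|\F|/\eps)) = \tilde{O}(1)$ bits. Fix distinct $x_1,\dots,x_K \in \F'$ and let $\gamma\colon \F' \to (\F')^n$ be the coordinatewise interpolant of degree $\le K-1$ with $\gamma(x_i) = a_i$ for all $i$. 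Put $P := C\circ\gamma$; since $C$ computes a polynomial of degree $\le d$, $P$ is univariate of degree $\le d(K-1)$. The \emph{canonical proof} of the output is the coefficient vector of $P$ --- this is $d(K-1)+1$ elements of $\F'$, hence $\tilde{O}(Kd)$ bits, and it depends only on $C$ and the $a_i$, which gives the asserted uniqueness.

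Given a claimed output $(v_1,\dots,v_K)$ and a claimed polynomial $\tilde{P}$ of degree $\le d(K-1)$ (as a coefficient vector), the verifier performs two tests. (i) Check that $\tilde{P}(x_i) = v_i$ for every $i$; by fast multipoint evaluation of a degree-$\le d(K-1)$ polynomial at $K$ points this takes $\tilde{O}(dK)$ time. (ii) Draw one uniform $r \in \F'$, compute $\gamma(r) \in (\F')^n$ by interpolating and evaluating the $n$ coordinate polynomials at $r$ in $\tilde{O}(nK)$ total time, evaluate $C$ at the point $\gamma(r)$ in $O(s)$ field operations, evaluate $\tilde{P}(r)$ by Horner's rule in $O(dK)$ operations, and accept iff $\tilde{P}(r) = C(\gamma(r))$. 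The running time is $\tilde{O}(nK + s + dK) = \tilde{O}(K\max\{d,n\} + s)$, and the only randomness is the $\tilde{O}(1)$ bits naming $r$.

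Completeness is immediate, as the canonical $P$ satisfies $P(x_i) = C(a_i) = v_i$ and $P(r) = C(\gamma(r))$ as formal identities. For soundness, suppose the verifier accepts $(v_1,\dots,v_K)$ with some $\tilde{P}$. Both $\tilde{P}$ and $P$ have degree $\le d(K-1) < |\F'|$, and $C(\gamma(r)) = P(r)$, so if $\tilde{P} \ne P$ then Schwartz--Zippel gives $\Pr_r[\tilde{P}(r) = C(\gamma(r))] = \Pr_r[\tilde{P}(r) = P(r)] \le d(K-1)/|\F'| \le \eps$. Hence, except with probability $\le \eps$, $\tilde{P} = P$, and then test (i) forces $v_i = \tilde{P}(x_i) = P(x_i) = C(a_i)$ for all $i$. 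So no proof is ever accepted together with an incorrect output with probability exceeding $\eps$, and $P$ is the unique string accepted with probability greater than $\eps$.

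The only genuinely new step is the first: reducing $K$ evaluations of an $n$-variate circuit to a single evaluation along an interpolating curve, which is viable precisely because low circuit degree keeps $C\circ\gamma$ univariate of degree $d(K-1)$; everything after that is routine fast univariate polynomial arithmetic (multipoint evaluation, interpolation, Horner) together with Schwartz--Zippel amplification over an extension field. The point that needs a little care is that over a small field a nonzero univariate polynomial can vanish on all of $\F$, so the move to $\F'$ is genuinely necessary rather than merely convenient; the stated choice of $|\F'|$ simultaneously provides $K$ distinct evaluation points and the Schwartz--Zippel bound while keeping the randomness and the per-operation overhead polylogarithmic in the relevant parameters.
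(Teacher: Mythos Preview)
Your proposal is correct and essentially identical to the paper's own proof: the paper too passes to an extension field of size exceeding $dK/\eps$, builds the coordinate interpolants $\Psi_j$ (your $\gamma$), defines the univariate $R(x)=C(\Psi_1(x),\ldots,\Psi_n(x))$ (your $P=C\circ\gamma$) as the proof, and verifies by checking $Q(r)=C(\Psi_1(r),\ldots,\Psi_n(r))$ at one random $r$ before multipoint-evaluating $Q$ at the $K$ designated abscissae. The only cosmetic difference is that the paper has Arthur directly output $(Q(\alpha_1),\ldots,Q(\alpha_K))$ rather than check it against a separately supplied $(v_1,\ldots,v_K)$.
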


The proof system is fairly simple to motivate. We want the proof to be a succinct representation of the circuit $C$ that is both easy to evaluate on all of the $K$ given inputs, and also easy to verify with randomness. We will set the proof to be a univariate polynomial $Q(x)$ defined over a sufficiently large extension field of $\F$, of degree about $K \cdot d$, that ``sketches'' the evaluation of the degree-$d$ arithmetic circuit $C$ over all $K$ assignments. The polynomial $Q$ satisfies two conflicting conditions:
\begin{compactenum}
\item The verifier can use the sketch $Q$ to efficiently produce the truth table of $C$. In particular, for some explicitly chosen $\alpha_i$ from the extension of $\F$, we have $(Q(\alpha_0), Q(\alpha_1),\ldots,Q(\alpha_K)) = (C(a_1),\ldots,C(a_K))$. 
\item The verifier can check that $Q$ is a faithful representation of $C$'s behavior on the list of $K$ inputs in about $K + |C|$ time, with randomness.
\end{compactenum}
The construction of $Q$ uses an trick originating from the holographic proofs of Babai {\em et al.}~\cite{Babai-Fortnow-Levin-Szegedy91}, in which multivariate expressions are efficiently ``expressed'' as univariate ones. Both of the two items utilize fast algorithms for manipulating univariate polynomials. In the parlance of interactive proofs, Theorem~\ref{main-multipoint} gives a \emph{Merlin-Arthur proof system} for batch evaluation (Merlin is the prover, Arthur is the verifier, and Merlin communicates first).

{\bf Applications to Some Exponential Time Hypotheses.} The results of this paper were originally motivated by attempts to refute \emph{exponential time hypotheses} of increasing strength. The Exponential Time Hypothesis (ETH)~\cite{IPZ01} is that 3-SAT requires $2^{\eps n}$ time for some $\eps > 0$; ETH has been singularly influential in the area of exact algorithms for $\NP$-hard problems (see~\cite{LokshtanovMS11survey} for a survey). A more fine-grained version of ETH is the Strong Exponential Time Hypothesis (SETH)~\cite{IP01,CIP09}, which further asserts that $k$-SAT  requires $2^{n-o(n)}$ time for unbounded $k$. SETH has also been a powerful driver of research in the past several years, especially with its connections to the solvability of basic problems in $\P$ (see the recent survey~\cite{VVWilliams15}).

Recently, Carmosino \emph{et al.}~\cite{CarmosinoGIMPS15} proposed the Nondeterministic Strong ETH (NSETH): refuting unsatisfiable $k$-CNFs requires nondeterministic $2^{n-o(n)}$ time for unbounded $k$. Put another way, NSETH says there are no proof systems that can refute unsatisfiable $k$-SAT instances significantly more efficiently than enumeration of all variable assignments. The NSETH is quite consistent with known results in proof complexity~\cite{PudlakI00,BeckI13}. Earlier, Carmosino \emph{et al.} (private communication) also proposed a Merlin-Arthur and Arthur-Merlin Strong ETH (MASETH and AMSETH, respectively) which assert that no $O(1)$-round probabilistic proof systems can refute unsatisfiable $k$-CNFs in $2^{n-\Omega(n)}$ time.

Our first application of Theorem~\ref{main-multipoint} is a strong refutation of MASETH and AMSETH:

\begin{theorem}[MASETH is False] \label{MASETH} There is a probabilistic verifier $V$ where, for every Boolean circuit $C$ on $n$ variables of $o(n)$ depth and bounded fan-in,
\begin{compactitem}
\item There is an $O^{\star}(2^{n/2})$-bit proof that the number of SAT assignments to $C$ is a claimed value\footnote{The $O^{\star}$ notation omits polynomial factors in $n$.}, and  
\item The proof can checked by $V$ with access to $C$, using $O(n)$ bits of randomness and $O^{\star}(2^{n/2})$ time, with probability of error at most $1/\poly(n)$.
\end{compactitem}
\end{theorem}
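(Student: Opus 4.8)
The plan is to reduce counting the satisfying assignments of a depth-$o(n)$, bounded-fan-in Boolean circuit $C$ to a single instance of $\MCE$ with $K=2^{n/2}$ evaluation points, and then apply Theorem~\ref{main-multipoint}. Note first that a bounded-fan-in circuit of depth $o(n)$ has at most $2^{o(n)}$ gates reachable from its output, so after deleting dead gates we may assume $|C|=2^{o(n)}$. Fix a prime $p$ with $2^n<p\le 2^{n+1}$ (one exists by Bertrand's postulate; to spare Arthur a primality test one may instead work modulo $O(n/\log n)$ fixed primes of magnitude $\poly(n)$ with product exceeding $2^n$ and recombine by the Chinese Remainder Theorem, costing only a $\poly(n)$ factor). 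Arithmetize $C$ over $\F_p$ in the usual way --- replacing $u\wedge v$ by $uv$, $u\vee v$ by $u+v-uv$, and $\neg u$ by $1-u$ --- to obtain an arithmetic circuit $\tilde C(x_1,\ldots,x_n)$ with $\tilde C(x)=C(x)$ for all $x\in\{0,1\}^n$, of size $O(|C|)=2^{o(n)}$ and degree at most $2^{\mathrm{depth}(C)}=2^{o(n)}$.

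Next, split the inputs as $x=(y,z)$ with $y,z\in\{0,1\}^{n/2}$ (for odd $n$, as evenly as possible) and form the arithmetic circuit
\[
D(y)\ :=\ \sum_{z\in\{0,1\}^{n/2}}\tilde C(y,z),
\]
consisting of $2^{n/2}$ copies of $\tilde C$, each with its $z$-inputs hard-wired to a distinct Boolean string, feeding a balanced addition tree. Then $D$ has $n/2$ variables, size $O(2^{n/2}\cdot|\tilde C|)=2^{n/2+o(n)}$, and degree at most $2^{o(n)}$; moreover Arthur can write $D$ down from $C$ in $2^{n/2+o(n)}$ time. The purpose of the split is the identity
\[
\sum_{y\in\{0,1\}^{n/2}}D(y)\ =\ \sum_{x\in\{0,1\}^n}\tilde C(x)\ =\ \#\mathrm{SAT}(C),
\]
which, since $p>2^n\ge\#\mathrm{SAT}(C)$, holds over $\Z$ and not merely modulo $p$.

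Now apply Theorem~\ref{main-multipoint} to the circuit $D$ with the $K=2^{n/2}$ evaluation points $\{0,1\}^{n/2}\isin\F_p^{n/2}$ and error $\eps=1/\poly(n)$. Merlin sends the claimed values $\big(D(a)\big)_{a\in\{0,1\}^{n/2}}$ --- that is $2^{n/2}$ field elements of $O(n)$ bits each --- the $\tilde O(K\cdot d)=2^{n/2+o(n)}$-bit certificate furnished by the theorem, and a claimed count $N$. Arthur (i) runs the verifier of Theorem~\ref{main-multipoint} on $D$, which takes $\tilde O\big(K\cdot\max\{d,n/2\}+|D|\big)=2^{n/2+o(n)}$ time and the small number of random bits promised there, and (ii) checks $\sum_{a\in\{0,1\}^{n/2}}D(a)=N$ in a further $\tilde O(2^{n/2})$ operations; he outputs $N$ iff both checks pass. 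Soundness follows immediately from that of Theorem~\ref{main-multipoint}: if (i) accepts then, except with probability $\eps$, the values Merlin sent are the true $D(a)$, whereupon (ii) forces $N=\sum_a D(a)=\#\mathrm{SAT}(C)$; completeness is clear. Every quantity above is $2^{n/2+o(n)}=2^{n(1/2+o(1))}$; this is genuinely $O^\star(2^{n/2})$ when $\mathrm{depth}(C)=O(\log n)$ (so $\tilde C$ has $\poly(n)$ size and degree), and in all cases lies well below $2^{n-\Omega(n)}$, refuting MASETH.

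The single place that needs care is the interplay between the half-split and the degree: it is essential that $\mathrm{depth}(C)=o(n)$, so that $\tilde C$ has subexponential degree $d=2^{o(n)}$, because both the certificate length and the verification time scale like $K\cdot d=2^{n/2}\cdot 2^{o(n)}$; were $d$ as large as $2^{\Omega(n)}$ this product could exceed $2^n$ and the reduction would be vacuous. Everything else --- the arithmetization of $C$, preventing the integer count from wrapping around (handled by taking characteristic $>2^n$, or by the Chinese Remainder Theorem over small primes), and the final summation --- is routine, and the randomness and error bounds are inherited verbatim from Theorem~\ref{main-multipoint}.
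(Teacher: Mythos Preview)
Your proposal is correct and follows essentially the same route as the paper: arithmetize the circuit, split the variables in half, absorb one half into a $2^{n/2}$-size sum to obtain a new circuit $D$, and invoke Theorem~\ref{main-multipoint} on $D$ with $K=2^{n/2}$ Boolean evaluation points (this is exactly the paper's Theorem~\ref{VP-sim}, specialized as in Theorem~\ref{counting-SAT} and the remark following it about $o(n)$-depth circuits). Your treatment is even slightly more direct than the paper's, which first unrolls the depth-$o(n)$ circuit into a $2^{o(n)}$-size formula and invokes Brent--Spira rebalancing before arithmetizing; you instead bound $\deg(\tilde C)\le 2^{\mathrm{depth}(C)}$ straight from the circuit, which is equally valid and avoids the detour. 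Your caveat that the resulting bound is $2^{n/2+o(n)}$ rather than literally $O^\star(2^{n/2})$ is well taken---the paper makes the same concession in the text after Theorem~\ref{counting-SAT}---and, as you note, this is amply sufficient to refute MASETH.
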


That is, one can refute UNSAT circuits of $2^{o(n)}$ size and $o(n)$ depth significantly faster than brute force enumeration, using a small amount of randomness in verification. Analogues of Theorem~\ref{MASETH} hold for other $\#P$-complete problems: for instance, the Permanent can be certified in $O^{\star}(2^{n/2})$ time, and the number of Boolean feasible solutions to a linear program can be certified in $O^{\star}(2^{3n/4})$.
In fact, if we allow the proof to depend on $O(n)$ coins tossed prior to sending the proof, one can also solve \emph{Quantified Boolean Formulas} (QBF) faster:

\begin{theorem} \label{QBF}
QBFs with $n$ variables and $m \leq 2^n$ connectives have a three-round $2^{2n/3} \cdot \poly(n,m)$ time interactive proof system using $O(n)$ bits of randomness.
\end{theorem}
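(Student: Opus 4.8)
The plan is to combine the batch-evaluation protocol of Theorem~\ref{main-multipoint} with a ``meet in the middle'' split of the variables, using the one extra ``Arthur-speaks-first'' round of an AMA protocol to fold a Shamir-style quantifier elimination into a single non-interactive certificate. Put $\Phi$ in prenex form $\Phi = Q_1 x_1\cdots Q_n x_n\,\phi$ and split the variables, from the outside in, into three blocks $X_1,X_2,X_3$ of $n/3$ variables each. At a high level, Arthur will enumerate the $2^{n/3}$ Boolean assignments to the outer block $X_1$ by brute force, the $n/3$ quantifiers over the inner block $X_3$ will be absorbed into an arithmetic circuit, and the $n/3$ quantifiers over the middle block $X_2$ will be removed by a single round of sumcheck-type interaction using $O(n)$ coins tossed before Merlin's message. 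For the arithmetization, work over a field $\F$ of size $2^{\Theta(n)}\poly(m)$ (so one field element is $O(n)$ bits and all the identity tests below are sound), replace $\phi$ by its standard arithmetization $\tilde\phi$ --- an arithmetic circuit of size $O(m)$ and degree $\le m$ agreeing with $\phi$ on $\{0,1\}^n$ --- and peel off the quantifiers of $X_3$ one at a time, each $\exists$ becoming $a+b-ab$ and each $\forall$ becoming $ab$ (with $a,b$ the two restrictions of the running polynomial to the newly peeled variable set to $0$ and $1$). This produces an arithmetic circuit $F$ on the $2n/3$ variables of $X_1\cup X_2$, of size $\le 2^{n/3}\poly(m)$ and degree $\le 2^{n/3}m$, with $F(u,v)\in\{0,1\}$ equal to the truth value of the inner subformula for every Boolean $(u,v)$; the point is that $F$ is only singly-exponentially large, which is the regime in which Theorem~\ref{main-multipoint} gives sub-$2^{2n/3}$ certificates.

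Second, I would eliminate the middle block. For a Boolean $X_1$-assignment $u$, the value $\mathrm{val}[\,Q_{n/3+1}x\cdots F(u,\cdot)\,]$ over the $n/3$ variables of $X_2$ is obtained by running the Shamir-type quantifier operators (with interleaved degree-reduction operators, to keep the per-variable degree under control) on the polynomial $F(u,\cdot)$; using the univariate-substitution device of Babai, Fortnow, Levin and Szegedy~\cite{Babai-Fortnow-Levin-Szegedy91} --- the same trick driving Theorem~\ref{main-multipoint} --- this operator sequence can be compressed into one message. Concretely, Arthur sends $O(n)$ random bits $\rho$ up front, Merlin's certificate contains the (short, $2^{2n/3}\poly$-bit) univariate data implementing the $X_2$-elimination, and checking this data reduces the family of claims ``$\mathrm{val}[\,Q_{n/3+1}x\cdots F(u,\cdot)\,]=\beta_u$'', over all $u\in\{0,1\}^{n/3}$ simultaneously, to verifying the values of $F$ itself at a batch of $2^{n/3}\poly$ points determined by $\rho$.

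Third, I would invoke Theorem~\ref{main-multipoint} on $F$ (size $2^{n/3}\poly(m)$, degree $d\le 2^{n/3}m$) at these $K=2^{n/3}\poly$ points: the certificate of the claimed values has length $\tilde O(Kd)=2^{2n/3}\poly(n,m)$ and is checkable in time $\tilde O(K\max\{d,n\}+|F|)=2^{2n/3}\poly(n,m)$ with $\tilde O(1)$ further coins. Arthur then holds the certified values $\beta_u$ for all $u$, reads off $\mathrm{val}[\,Q_1 x\cdots(\beta_\cdot)\,]$ from this $2^{n/3}$-entry table in $\poly(n)$ time, and accepts iff it equals Merlin's claimed answer; completeness is immediate and soundness is a union bound over the $\poly$-many identity tests performed. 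Counting rounds: Arthur's opening coin toss, Merlin's certificate, and Arthur's verification give three rounds, and the randomness used is $O(n)$ bits.

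The step I expect to be the main obstacle is the interplay between the universal quantifiers and the degree/length budget. Naively folding a $\forall$-block of $n/3$ variables multiplies the degree by $2^{n/3}$, which would push the product $Kd$ in the last step back up to $2^{n}$; so one must keep the middle block out of the circuit $F$, eliminate it by the single-round fold, and interleave just enough degree reduction inside that fold that the certificate stays $2^{2n/3}\poly$-sized while using only $O(n)$ coins --- all while keeping $\F$ large enough that the intermediate satisfaction counts never collapse to zero modulo its characteristic. The $2^{2n/3}$ bound, as opposed to the $2^{n/2}$ available for $\#$SAT (where the inner block is merely summed and contributes no degree blowup), is exactly what falls out of balancing the three budgets --- brute force over $X_1$, the single-round fold over $X_2$, and the batch evaluation of the $X_3$-circuit --- against a three-way split.
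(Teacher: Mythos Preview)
Your three-block decomposition differs fundamentally from the paper's, and the crucial middle step---the ``single-round fold'' eliminating the $n/3$ quantifiers over $X_2$---is a genuine gap. The BFLS univariate substitution underlying Theorem~\ref{main-multipoint} lets one batch-evaluate a \emph{fixed} low-degree circuit at many points; it does not compress a Shamir-style sequence of quantifier eliminations (with interleaved linearizations) into a single message. If Arthur's challenges $r_1,\ldots,r_{n/3}$ are all sent before Merlin speaks, the standard sumcheck soundness argument collapses: knowing $r_i$ in advance, Merlin can choose each round-$i$ univariate polynomial to agree with the honest one at the single point $r_i$ while taking whatever value he likes under the quantifier operator, so the chain of consistency checks telescopes to nothing. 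And without linearization, folding the $X_2$ quantifiers into the circuit multiplies its degree by another $2^{n/3}$, sending $Kd$ back to $2^n$. You correctly flag exactly this as ``the main obstacle,'' but the proposal does not actually overcome it.

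The paper sidesteps the middle block entirely via a much simpler \emph{two}-block split of sizes $(1-\delta)n$ and $\delta n$ with $\delta=2/3$, combined with a pigeonhole on quantifier types: among the innermost $2n/3$ quantifiers, one of $\{\exists,\forall\}$ occurs at most $n/3$ times, and by negating $\Phi$ if necessary one may assume it is $\forall$. Arithmetizing the entire inner block with $\exists\mapsto$ sum and $\forall\mapsto$ product then costs degree only $\poly(m)\cdot 2^{n/3}$ (sums do not raise the degree; only the $\le n/3$ products do), so batch-evaluating the resulting $2^{2n/3}\poly(m)$-size circuit on the $K=2^{n/3}$ outer Boolean assignments already gives $Kd=2^{2n/3}\poly$. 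The first Arthur round is used for a purpose entirely different from what you propose: because the integer values of this sum/product arithmetization can be as large as $(2^n m)^{O(2^{n/3})}$, Arthur sends a random prime $p\in[2,2^{2n^2}m]$ so that nonzeroness is preserved modulo $p$ with high probability; the remaining two rounds are simply the MA protocol of Theorem~\ref{main-multipoint} over $\F_p$.
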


A seminal result in interactive computation is that $\PSPACE={\sf IP}$; that is, polynomial space captures interactive proof systems that use $\poly(n)$ time and $\poly(n)$ rounds~\cite{Shamir90}. Theorem~\ref{QBF} shows how three rounds of interaction can already significantly reduce the cost of evaluating $\PSPACE$-complete problems. From these results, we see that either $O(n)$ bits of randomness can make a substantial difference in the proof lengths of $n$-bit propositions, or the Nondeterministic SETH is false. In fact, one can isolate a simple \emph{univariate polynomial identity testing} problem that is solvable in $\tilde{O}(n)$ randomized time and $\tilde{O}(n^2)$ time deterministically, but an $n^{1.999}$-time nondeterministic algorithm would refute NSETH; see Section~\ref{upit-nseth}.

\paragraph{Applications to Some Polynomial-Time Problems.} In Appendix~\ref{ov-section}, we apply Theorem~\ref{main-multipoint} to a group of problems at the basis of a recent theory of ``hardness within $\P$''~\cite{VVWilliams15}. A central problem in this theory is {\sc Orthogonal Vectors}, which asks if there is an orthogonal pair among $n$ Boolean vectors in $d$ dimensions~\cite{Williams05,
RoddityV13,Williams-Yu14,Bringmann14,AbboudV14,AbboudWY15,BackursI15,AbboudBV15}. The \emph{OV conjecture} is that this problem cannot be solved in $n^{2-\eps}\cdot 2^{o(d)}$, for every $\eps > 0$. It is known that SETH implies the OV conjecture~\cite{Williams05,Williams-Yu14}. The OV conjecture can also be refuted in the Merlin-Arthur setting, in the following sense:

\begin{theorem}\label{ov} Let $d \leq n$. There is an MA-proof system such that for every $A \subseteq \{0,1\}^d$ with $|A|=n$, the verifier certifies the number of orthogonal pairs in $A$, running in $\tilde{O}(n\cdot d)$ time with error probability $1/\poly(n)$.
\end{theorem}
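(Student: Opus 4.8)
The plan is to reduce counting orthogonal pairs to a single invocation of the batch‑evaluation proof system of Theorem~\ref{main-multipoint}, applied to an arithmetic circuit carefully chosen so that it has only $d$ inputs and only $n$ evaluation points. Write $A=\{a_1,\dots,a_n\}\subseteq\{0,1\}^d$, fix a prime $p$ with $n^2<p\le\poly(n)$ (large enough for soundness; see below), and work over $\F=\F_p$. The number of ordered orthogonal pairs equals
\[
  N \;=\; \sum_{i=1}^n\sum_{j=1}^n\;\prod_{k=1}^d\bigl(1-a_i[k]\,a_j[k]\bigr),
\]
because the inner product $\sum_k a_i[k]a_j[k]$ is a nonnegative integer and the product over $k$ is $1$ precisely when it vanishes; since $0\le N\le n^2<p$, the integer $N$ is recovered from $N\bmod p$. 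Now define the $d$‑variate polynomial
\[
  f(x_1,\dots,x_d)\;=\;\sum_{j=1}^n\;\prod_{k=1}^d\bigl(1-x_k\,a_j[k]\bigr),
\]
so that $f(a_i)$ counts the vectors of $A$ orthogonal to $a_i$ and $N=\sum_{i=1}^n f(a_i)\bmod p$. The essential point is that $f$ is computed by an explicit circuit $C_A$ of size $s=O(nd)$ and degree $d$: each factor $1-x_k a_j[k]$ is either the constant $1$ or the gate $1-x_k$, the $j$‑th term is a product of at most $d$ such factors, and the outer layer is a sum of $n$ terms. This circuit depends only on the input instance $A$, so the verifier can build (and read) it in $O(nd)$ time.

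Next, we apply the proof system of Theorem~\ref{main-multipoint} to $C_A$ with the $K=n$ evaluation points $a_1,\dots,a_n\in\F^d$ and error parameter $\eps=1/\poly(n)$. Merlin sends the claimed values $f(a_1),\dots,f(a_n)$ and the degree‑$\tilde{O}(nd)$ ``sketch'' polynomial $Q$; by the theorem the proof is $\tilde{O}(K\cdot d)=\tilde{O}(nd)$ bits long. Arthur, using a polylogarithmic number of coins, checks it in time $\tilde{O}\bigl(K\cdot\max\{d,\,(\text{number of inputs of }C_A)\}+s\bigr)=\tilde{O}\bigl(n\cdot d+nd\bigr)=\tilde{O}(nd)$, and on acceptance is assured---except with probability $\eps$---that the received values are exactly $f(a_1),\dots,f(a_n)$. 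Arthur then computes $\sum_{i=1}^n f(a_i)\bmod p$, which is $N$, in a further $\tilde{O}(n)$ time; whichever precise notion of ``pair'' is intended (unordered pairs, pairs with $i\ne j$, etc.) is obtained from $N$ by subtracting the $n$ diagonal contributions $\prod_k(1-a_i[k])$ (i.e.\ the number of all‑zero vectors in $A$) and/or dividing by $2$, each an $\tilde{O}(n)$ post‑processing step. The total is an MA‑proof system running in $\tilde{O}(nd)$ time with error $1/\poly(n)$, as claimed.

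The only place requiring genuine care---and the one nontrivial idea---is the parameter accounting just above. The obvious arithmetization writes the bilinear ``dot‑product'' predicate as a circuit on $2d$ variables and evaluates it on all $n^2$ pairs, which Theorem~\ref{main-multipoint} certifies only in $\tilde{O}(n^2 d)$ time; folding the sum over $j$ \emph{inside} the circuit $C_A$ is what cuts the number of evaluation points from $n^2$ to $n$ while keeping both the input count and the degree at $d$, so that the $K\cdot\max\{d,n_{\mathrm{inputs}}\}$ term becomes $n\cdot d$ rather than $n^2$. One should also confirm that $|\F|=\poly(n)$ is simultaneously large enough to drive the Schwartz--Zippel‑type soundness error of Theorem~\ref{main-multipoint} (which scales with $\deg Q/|\F|=\tilde{O}(nd)/|\F|$, and $d\le n$) below $1/\poly(n)$, and small enough that each field operation costs $\tilde{O}(1)$; both hold for a suitable $\poly(n)$‑sized prime $p$. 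Everything else is a direct instantiation of Theorem~\ref{main-multipoint}.
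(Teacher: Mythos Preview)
Your proposal is correct and follows essentially the same approach as the paper: define the $d$-variate circuit $f(x)=\sum_{j=1}^n\prod_{k=1}^d(1-x_k\,a_j[k])$ of size $O(nd)$ and degree $O(d)$, then invoke Theorem~\ref{main-multipoint} on the $K=n$ points $a_1,\dots,a_n$. Your explicit observation that folding the sum over $j$ inside the circuit is what reduces the evaluation count from $n^2$ to $n$ is exactly the point of the construction.
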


Because several basic problems in $\P$ can be subquadratic-time reduced to {\sc Orthogonal Vectors} (see the above references and Appendix~\ref{ov-section}), Theorem~\ref{ov} implies subquadratic-time MA-proof systems for these problems as well. To give another example, we also obtain a nearly-linear time proof system for verifying {\sc Closest Pairs} in the Hamming metric: 

\begin{theorem}\label{hamming} Let $d \leq n$. There is an MA-proof system such that for every $A \subseteq \{0,1\}^d$ with $|A|=n$, and every given parameter $k \in \{0,1,\ldots,d\}$, the verifier certifies for all $v \in A$ the number of points $w \in A$ with Hamming distance at most $k$ from $v$, running in $\tilde{O}(n\cdot d)$ time with error probability $1/\poly(n)$.
\end{theorem}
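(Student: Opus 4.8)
\noindent\emph{Proof plan.}
The plan is to reduce to $\MCE$ and apply Theorem~\ref{main-multipoint} essentially as a black box. Fix a prime $p$ with $p > \max\{d,n\}$ that is also at least $\poly(n)/\eps$ for the target error $\eps=1/\poly(n)$, and work over $\F=\F_p$; view each $v\in A$ as a point of $\F^d$. The key structural observation is that the Hamming distance of two Boolean vectors is \emph{affine} in each argument: for $v,w\in\{0,1\}^d$,
\[
\mathrm{Ham}(v,w)=\sum_{i=1}^d\bigl(v_i+w_i-2v_iw_i\bigr),
\]
which for fixed $w$ is a degree-$1$ polynomial in $v$ taking the correct integer value in $\{0,1,\ldots,d\}\subseteq\F$. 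Let $T_k(z)$ be the unique univariate polynomial of degree at most $d$ over $\F$ with $T_k(j)=1$ for $0\le j\le k$ and $T_k(j)=0$ for $k<j\le d$; the verifier can compute its coefficients in $\tilde{O}(d)$ time (even the naive $O(d^2)$ suffices, since $d\le n$).

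Next I would hard-wire the set $A$ into an arithmetic circuit $C$ on the $d$ variables $v=(v_1,\ldots,v_d)$ computing
\[
C(v)=\sum_{w\in A}T_k\bigl(\mathrm{Ham}(v,w)\bigr).
\]
For each fixed $w$, a subcircuit of $O(d)$ gates and degree $1$ computes $\mathrm{Ham}(v,w)$, and evaluating $T_k$ on it by Horner's rule adds $O(d)$ more gates while raising the degree to at most $d$; summing over the $n$ vectors $w\in A$ yields a circuit of size $s=O(nd)$, degree at most $d$, on $n_{\mathrm{vars}}=d$ inputs. On any Boolean $v\in A$ we have $\mathrm{Ham}(v,w)\in\{0,\ldots,d\}$, so $T_k(\mathrm{Ham}(v,w))$ is exactly the indicator that $w$ lies within distance $k$ of $v$; since $p>n$ the sum does not wrap, so $C(v)$ read as an integer is precisely the count the theorem asks us to certify.

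Finally, invoke Theorem~\ref{main-multipoint} on $C$ with the $K=n$ inputs $\{v:v\in A\}$ and error parameter $\eps$: the prover sends a unique proof of $\tilde{O}(Kd)=\tilde{O}(nd)$ bits, and the verifier checks it in $\tilde{O}\bigl(K\cdot\max\{d,n_{\mathrm{vars}}\}+s\bigr)=\tilde{O}(nd)$ time with error at most $\eps=1/\poly(n)$; building $C$ and computing $T_k$ also fit this budget. I do not expect a genuine obstacle here: the only things needing care are the choice of $\F$ (characteristic large enough that the distances $0,\ldots,d$ are distinct field elements, order large enough that counts bounded by $n$ do not wrap, and large enough for the $\eps$ bound --- all met by $p=\poly(n)$ since $d\le n$) and the cheap computation of the interpolant $T_k$; the ``affine Hamming distance'' observation is exactly what keeps the degree at $d$ and the circuit size at $O(nd)$, so that Theorem~\ref{main-multipoint} applies directly.
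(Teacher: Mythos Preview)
Your proposal is correct and follows essentially the same approach as the paper: build a circuit that, for each $w\in A$, computes a linear-in-$v$ proxy for Hamming distance, applies a univariate threshold interpolant, sums over $A$, and then invokes the multipoint-evaluation protocol. The only cosmetic difference is that the paper maps vectors to $\{-1,1\}^d$ and uses the inner product (range $\{-d,\ldots,d\}$, interpolant of degree $2d$), whereas you stay in $\{0,1\}^d$ and use the identity $\mathrm{Ham}(v,w)=\sum_i(v_i+w_i-2v_iw_i)$ (range $\{0,\ldots,d\}$, interpolant of degree $d$); this halves the degree constant but does not change the argument or the $\tilde O(nd)$ bound.
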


The best known randomized algorithm for Hamming nearest neighbors only runs in $o(n^2)$ time when $d = o(\log^2 n/\log \log n)$~\cite{AlmanW15}. Finally, we also give an efficient proof system for the $k$-clique problem:

\begin{theorem}\label{k-clique} For every $k$, there is a MA-proof system such that for every graph $G$ on $n$ nodes, the verifier certifies the number of $k$-cliques in $G$ using $\tilde{O}(n^{\lfloor k/2\rfloor+2})$ time, with error probability $1/\poly(n)$.
\end{theorem}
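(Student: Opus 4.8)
The plan is to reduce $k$-clique counting to a single instance of $\MCE$ (Theorem~\ref{main-multipoint}) by expressing the clique count as a small arithmetic circuit of low degree that must be evaluated on many points. First I would split the $k$ vertices of a prospective clique into a ``left'' half of size $\lfloor k/2 \rfloor$ and a ``right'' half of size $\lceil k/2 \rceil$. Enumerate all $K_L = \binom{n}{\lfloor k/2\rfloor} = O(n^{\lfloor k/2\rfloor})$ subsets $S$ of size $\lfloor k/2\rfloor$ that form a clique, and likewise all $K_R = O(n^{\lceil k/2\rceil})$ right-cliques $T$; the number of $k$-cliques is then (up to a fixed multiplicative correction accounting for overcounting) the number of pairs $(S,T)$ with $S\cap T=\emptyset$ and every $S$--$T$ edge present. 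The standard trick is to encode each left-clique $S$ as an input vector $a_S\in\F^{n}$ (its characteristic vector, or a suitable gadget encoding) and each right-clique $T$ as a vector so that a fixed arithmetic circuit $C$ of size $\poly(n)$ and degree $\poly(k)$ computes, on input $(a_S, a_T)$, the indicator that $S\cup T$ is a $k$-clique. Summing $C$ over all pairs gives the count; but summing over all $K_L\cdot K_R \approx n^{k/2}\cdot n^{k/2}$ pairs is too slow, so instead I would fix the right half and, for each fixed $T$, view $C(\cdot, a_T)$ as a circuit in the left input alone, to be evaluated at the $K_L$ points $\{a_S\}$.

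The second step is to set up the batch-evaluation instance so that Theorem~\ref{main-multipoint} gives the claimed running time. For each of the $K_R = O(n^{\lceil k/2\rceil})$ right-halves $T$, we invoke the $\MCE$ proof system on the circuit $C_T(x) := C(x, a_T)$ (size $\poly(n)$, degree $d=\poly(k)$, a constant) over the $K_L = O(n^{\lfloor k/2\rfloor})$ left-half inputs. By Theorem~\ref{main-multipoint} the prover sends a proof of length $\tilde O(K_L\cdot d) = \tilde O(n^{\lfloor k/2\rfloor})$ and the verifier checks it in time $\tilde O(K_L\cdot\max\{d,n\} + |C_T|) = \tilde O(n^{\lfloor k/2\rfloor+1})$. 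Summing the $\MCE$ outputs for a fixed $T$ recovers $\sum_S [\,S\cup T \text{ is a }k\text{-clique}\,]$; summing these partial sums over all $T$ and dividing by the overcounting factor yields the $k$-clique count, which the prover also claims and the verifier recomputes from the verified $\MCE$ outputs. The total verifier time is $K_R$ times the per-instance cost, i.e. $\tilde O(n^{\lceil k/2\rceil}\cdot n^{\lfloor k/2\rfloor+1}) = \tilde O(n^{k+1})$ — which is \emph{not} yet the target $\tilde O(n^{\lfloor k/2\rfloor+2})$.

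To hit the stated bound I would avoid iterating over all $K_R$ right-halves separately and instead fold the whole double sum into one $\MCE$ instance. The key realization is that $\sum_{S,T}[S\cup T\text{ is a clique}]$ is itself a single polynomial quantity: define a circuit $C'$ whose input is (an encoding of) a single left-half $S$, and which internally computes $\sum_{T} [\,S\cup T\text{ is a }k\text{-clique}\,]$. For this inner sum to be cheap, note that once $S$ is fixed the condition on $T$ — $T$ is a clique, $T\cap S=\emptyset$, all $S$--$T$ edges present — factors over the $\lceil k/2\rceil$ vertices of $T$ only through pairwise constraints, so the inner sum is a permanent-like or subgraph-counting expression on the $O(n)$ vertices that can be written as an arithmetic circuit of size $\poly(n)$ and degree $\poly(k)$ in the adjacency-matrix entries and in $S$'s encoding. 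Then $C'$ has size $\poly(n)$ and constant degree, and we run $\MCE$ on $C'$ over just the $K_L = O(n^{\lfloor k/2\rfloor})$ inputs $\{a_S\}$: proof length $\tilde O(n^{\lfloor k/2\rfloor})$ and verification time $\tilde O(n^{\lfloor k/2\rfloor}\cdot n + |C'|) = \tilde O(n^{\lfloor k/2\rfloor+1} + \poly(n))$, and with a little slack in the circuit size $\poly(n) = \tilde O(n^{\lfloor k/2\rfloor+2})$ this matches the theorem. The verifier finally sums the $K_L$ returned values and rescales. The main obstacle is precisely this last design choice: arranging the inner sum over right-halves as a low-degree arithmetic circuit of size at most $\tilde O(n^{\lfloor k/2\rfloor+2})$, so that the circuit-size term $s$ in Theorem~\ref{main-multipoint} does not dominate; this requires a careful gadget so that ``$T$ ranges over $\binom{n}{\lceil k/2\rceil}$ cliques'' is computed by dynamic programming / iterated matrix products inside the circuit rather than by brute-force enumeration, and one must check the degree stays $O(k^2)$ so the $K_L\cdot d$ and $K_L\cdot d$ terms remain $\tilde O(n^{\lfloor k/2\rfloor})$. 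Correctness of the count and the error probability $1/\poly(n)$ then follow immediately from the uniqueness of the $\MCE$ proof and a union bound over the (single, or $n^{O(1)}$ many) invocations.
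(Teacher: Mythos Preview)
Your overall plan---split the $k$-clique into two halves, hard-wire one half into an arithmetic circuit, and invoke Theorem~\ref{main-multipoint} to batch-evaluate that circuit over encodings of the other half---is exactly the paper's strategy. The gap is in the circuit construction, which you correctly flag as ``the main obstacle'' but do not resolve.

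First, you propose that $C'$ compute $\sum_T [\,S\cup T\text{ is a }k\text{-clique}\,]$ by ``dynamic programming / iterated matrix products'' so as to avoid brute-force enumeration of the $\binom{n}{\lceil k/2\rceil}$ right-halves. There is no such shortcut: matrix powers count walks, not cliques, and any known arithmetic circuit counting $\ell$-cliques in an $n$-vertex graph has size $n^{\Omega(\ell)}$. Ironically, the brute-force enumeration you dismiss is precisely what is needed and already fits the budget: a sum over all $O(n^{\lceil k/2\rceil})$ candidate halves gives a circuit of size $n^{\lceil k/2\rceil + O(1)} \le \tilde O(n^{\lfloor k/2\rfloor+2})$, so the $s$-term in Theorem~\ref{main-multipoint} is harmless. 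Second, with $S$ encoded as a characteristic vector, the natural indicator that every $v\in T$ is adjacent to all of $S$ is $\prod_{v\in T}\prod_{u\notin N(v)}(1-x_u)$, which has degree $\Theta(n)$ in $x$, not $O(k^2)$. Degree $\Theta(n)$ is fine for the target bound (it contributes $K\cdot d \le n^{\lfloor k/2\rfloor+O(1)}$), but your claimed $O(k^2)$ is not obtainable with this encoding, and you never specify an alternative.

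The paper's instantiation simply accepts degree $O(n)$ and does the brute-force enumeration inside the circuit. With $\ell=\lfloor k/2\rfloor$, its circuit $C(x_1,\ldots,x_n)$ sums over all $\ell$-cliques $S$ of $G$; for each such $S$ it precomputes the joint neighborhood $J(S)$ and applies the $(k-\ell)$-th elementary symmetric polynomial $E^{k-\ell}_{|J(S)|}$ to the input variables indexed by $J(S)$. On a $0/1$ input with exactly $k-\ell$ ones encoding a set $T$, this term equals $1$ iff $T\subseteq J(S)$. Using Ben-Or's $O(n^2)$-size, degree-$O(n)$ circuit for elementary symmetric polynomials, $C$ has size $O(n^{\lfloor k/2\rfloor+2})$ and degree $O(n)$; evaluating it via Theorem~\ref{main-multipoint} on the $O\bigl(\binom{n}{\lceil k/2\rceil}\bigr)$ indicator vectors of $\lceil k/2\rceil$-cliques then gives the stated $\tilde O(n^{\lfloor k/2\rfloor+2})$ time. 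So your skeleton is right, but the missing ingredient is the elementary-symmetric-polynomial gadget (or, equivalently, the realization that brute-force enumeration inside the circuit is both necessary and sufficient).
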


\section{Preliminaries}

For a vector $v \in D^d$ for some domain $D$, we let $v[i] \in D$ denote the $i$th component of $v$. We assume basic familiarity with Computational Complexity, especially the theory of interactive proofs and Merlin-Arthur games as initiated by Goldwasser-Micali-Rackoff~\cite{GoldwasserMR85} and Babai~\cite{Babai85} (see Arora and Barak~\cite{Arora-Barak09}, Chapter 8). All of the interactive proofs (also known as ``protocols'') of this paper will use \emph{public} randomness, visible to the Prover (also known as ``Merlin'') and the Verifier (also known as ``Arthur''). Along the way, we will recall some particulars of known results as needed. 

\label{poly-prelims}
\paragraph{Some Algorithms for Polynomial Computations.} We need some classical results in algebraic complexity (see also von zur Gathen and Gerhard~\cite{vzG-G13}). Let $\F$ be an arbitrary field, and let $\mult(n) = O(n \log^2 n)$ be the time needed to multiply two degree-$n$ univariate polynomials. 

\begin{theorem}[Fast Multipoint Evaluation of Univariate Polynomials~\cite{Fiduccia72}] \label{multipoint-univariate} Given a polynomial $p(x) \in \F[X]$ with $\deg(p) \leq n$, presented as a vector of coefficients $[a_0,\ldots,a_{\deg(p)}]$, and given points $\alpha_1,\ldots,\alpha_n \in \F$, we can output the vector $(p(\alpha_1),\ldots,p(\alpha_n)) \in \F^n$ in $O(\mult(n) \cdot \log n)$ additions and multiplications in $\F$. 
\end{theorem}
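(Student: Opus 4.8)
The plan is to reduce polynomial evaluation to modular reduction and then exploit a \emph{subproduct tree} over the points, in the classical divide-and-conquer style. The starting observation is that $p(\alpha_i)$ is exactly the remainder of $p(x)$ on division by the linear polynomial $x-\alpha_i$, and more generally, if $M(x)=\prod_{i\in S}(x-\alpha_i)$ and $r(x)=p(x)\bmod M(x)$, then $p=qM+r$ gives $p(\alpha_i)=r(\alpha_i)$ for every $i\in S$ (since $(x-\alpha_i)\mid M$), while $\deg r<\deg M=|S|$. So as we narrow the set of target points we may also shrink the degree of the polynomial we are carrying around, which is what keeps the cost under control.

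Concretely, I would first pad $n$ up to a power of two and (after one reduction $p \leftarrow p \bmod \prod_i(x-\alpha_i)$, which preserves all the values) assume $\deg p < n$. Build a balanced binary tree $T$ whose leaves are $\alpha_1,\dots,\alpha_n$, and to each node $v$ with leaf-set $S_v$ attach the monic polynomial $M_v(x)=\prod_{i\in S_v}(x-\alpha_i)$. In a first, bottom-up pass, each $M_v$ is computed from its two children by a single polynomial multiplication; a node at depth $k$ below the root has $\deg M_v=n/2^k$ and there are $2^k$ such nodes, so — using that $\mult$ is (super)additive, $2\,\mult(m)\le\mult(2m)$, which holds since $\mult(m)=O(m\log^2 m)$ — the cost at level $k$ is $2^k\,\mult(n/2^k)\le\mult(n)$, and summing over the $O(\log n)$ levels gives $O(\mult(n)\log n)$. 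In a second, top-down pass, push $p$ through the tree: at node $v$ we already have $p_v:=p\bmod M_v$ with $\deg p_v<\deg M_v$, and for each child $w$ we set $p_w:=p_v\bmod M_w$. Each such step is a division-with-remainder of a polynomial of degree $O(m)$ by a monic polynomial of degree $m$, and the identical level-by-level accounting bounds this pass by $O(\mult(n)\log n)$ as well. At the leaf for $\alpha_i$, $M_v=x-\alpha_i$ is monic linear, so $p_v$ is the constant $p(\alpha_i)$, which we output; collecting the leaves gives $(p(\alpha_1),\dots,p(\alpha_n))$.

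The one ingredient that has to be invoked with care — and the only place with real content — is \emph{fast division with remainder}: given $f$ of degree $<2m$ and monic $g$ of degree $m$, one can produce $q,r$ with $f=qg+r$, $\deg r<m$, in $O(\mult(m))$ field operations. The standard argument reverses coefficients, computes the inverse of $\mathrm{rev}(g)$ modulo $x^{m}$ by Newton iteration (doubling the precision at each step, so the total cost is a geometric sum dominated by $O(\mult(m))$), recovers $q$ with one more multiplication, and sets $r=f-qg$. I expect everything else — checking the superadditivity of $\mult$ that makes the two geometric sums telescope to $O(\mult(n)\log n)$, and the cosmetic handling of $n$ not a power of two — to be routine bookkeeping on the tree, so the Newton-iteration division step is where I would focus the argument.
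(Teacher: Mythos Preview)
Your argument is the standard subproduct-tree proof and is correct; there is nothing to fault in the reduction to remaindering, the two-pass tree computation, the level-by-level accounting via superadditivity of $\mult$, or the Newton-iteration division step. Note, however, that the paper does not give its own proof of this statement: Theorem~\ref{multipoint-univariate} is quoted as a classical result with a citation to~\cite{Fiduccia72} (see also the reference to von zur Gathen and Gerhard~\cite{vzG-G13}), so there is no in-paper proof to compare against---what you have written is essentially the textbook argument behind the cited result.
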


\begin{theorem}[Fast Univariate Interpolation~\cite{Horowitz72}] \label{interpolation} Given a set of pairs $\{(\alpha_1,\beta_1),\ldots,(\alpha_n,\beta_n)\}$ with all $\alpha_i$ distinct, we can output the coefficients of $p(x) \in \F[X]$ of degree at most $n$ satisfying $p(\alpha_i) = \beta_i$ for all $i$, in $O(\mult(n) \cdot \log n)$ additions and multiplications in $\F$. 
\end{theorem}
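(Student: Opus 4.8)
The plan is to implement the classical divide-and-conquer realization of Lagrange interpolation, built around a \emph{subproduct tree} over $\alpha_1,\dots,\alpha_n$ and the fast multipoint evaluation routine of Theorem~\ref{multipoint-univariate}. Recall Lagrange's formula:
\[
p(x) = \sum_{i=1}^n \beta_i \prod_{j \neq i} \frac{x-\alpha_j}{\alpha_i - \alpha_j}
= \sum_{i=1}^n \frac{\beta_i}{M'(\alpha_i)}\cdot \frac{M(x)}{x-\alpha_i},
\qquad M(x) := \prod_{i=1}^n (x-\alpha_i),
\]
where the second equality uses $M'(\alpha_i) = \prod_{j\neq i}(\alpha_i-\alpha_j)$ (apply the product rule to $M$ and evaluate at $\alpha_i$). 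This quantity is nonzero precisely because the $\alpha_i$ are distinct, so the division is legitimate over an arbitrary field, and the resulting $p$ has degree at most $n-1 \le n$, as required; uniqueness follows from the nonvanishing Vandermonde determinant, so ``output the coefficients'' is well posed.

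First I would build the subproduct tree: a balanced binary tree with leaves $x-\alpha_1,\dots,x-\alpha_n$ in which each internal node stores the product of the linear factors below it, so that the root stores $M(x)$. Combining the two children at a node of subtree-size $m$ costs one multiplication of degree-$O(m)$ polynomials, i.e.\ $\mult(m)$; the total over level $t$ (which has $2^t$ nodes of size $\approx n/2^t$) is $2^t\cdot \mult(n/2^t) = O(n\log^2 n)$, and summing over the $O(\log n)$ levels gives $O(\mult(n)\log n)$. Second, I would differentiate the coefficient vector of $M(x)$ in linear time and run the fast multipoint evaluation of Theorem~\ref{multipoint-univariate} on $M'$ at $\alpha_1,\dots,\alpha_n$ to obtain all $M'(\alpha_i)$ in $O(\mult(n)\log n)$ time, then form the scalars $c_i := \beta_i / M'(\alpha_i)$.

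The conceptually central step is to assemble $p(x) = \sum_i c_i\, M(x)/(x-\alpha_i)$ within the same budget rather than the naive $\Theta(n^2)$. I would do this by a second, bottom-up pass over the \emph{same} tree: at a node $v$ with children carrying $M_L = \prod_{i\in L}(x-\alpha_i)$ and $M_R = \prod_{i\in R}(x-\alpha_i)$, set $P_v := \sum_{i\in L\cup R} c_i \prod_{j\in (L\cup R)\setminus\{i\}}(x-\alpha_j)$; the key identity is $P_v = M_R\cdot P_{v_L} + M_L\cdot P_{v_R}$, with base case $P_{\text{leaf }i}=c_i$. The root of this pass is exactly $p(x)$. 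Each node performs two polynomial multiplications (and one addition) of the degrees already accounted for in the tree, so the pass again costs $O(\mult(n)\log n)$; combined with the tree construction and the multipoint evaluation, this proves the bound.

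The main obstacle is purely bookkeeping: confirming that the per-level costs telescope to $O(\mult(n)\log n)$ under the stated arithmetic model—this needs only the mild and standard properties $\mult(a)+\mult(b)\le \mult(a+b)$ and $\mult(cn)=O(c\,\mult(n))$ for constant $c$, both satisfied by $\mult(n)=O(n\log^2 n)$—and verifying that every intermediate polynomial has the degree assumed in that accounting (the product at a node of size $m$ has degree $m$, and $P_v$ has degree $m-1$). There is no subtlety about zero divisors or field structure: distinctness makes each $M'(\alpha_i)$ invertible, and the algorithm uses only $+,-,\times,\div$ in $\F$, so it runs over any field.
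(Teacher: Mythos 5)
The paper does not prove this theorem at all---it is invoked as a classical black-box result cited to Horowitz and to von zur Gathen--Gerhard---and your proposal is precisely the standard subproduct-tree algorithm that those sources give, so there is nothing to diverge from. Your argument is correct and complete: the identity $P_v = M_R\cdot P_{v_L} + M_L\cdot P_{v_R}$ holds by splitting the sum over $i\in L$ and $i\in R$, the $M'(\alpha_i)$ are nonzero by distinctness, and the per-level cost accounting via superadditivity of $\mult$ gives the claimed $O(\mult(n)\log n)$ bound.
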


\subsection{More Related Work}

Besides what we have already mentioned, there is a vast body of work on non-interactive probabilistic protocols and delegating computation which we are ill-equipped to cover in detail. We confine ourselves to discussing results that seem closest to the present work.\footnote{We would be happy to hear of results related to ours that we did not cite.} 

There has been much work on bounding the communication between the prover and verifier. For instance, this is not the first time that Merlin and Arthur have led to an unexpected square-root speedup: Aaronson and Wigderson~\cite{Aaronson-Wigderson09} gave an MA communication protocol for computing the inner product of two $n$-length vectors which runs in $\tilde{O}(\sqrt{n})$ time. Their protocol uses a nice bivariate encoding of vectors, although it is somewhat different from ours (which is univariate). Gur and Rothblum~\cite{GurR15} obtain a similar square-root speedup for checking sums in the ``non-interactive property testing'' setting. Goldreich and Hastad~\cite{GoldreichH98} and Goldreich, Vadhan, and Wigderson~\cite{GoldreichVW02} studied interactive proofs which seek to minimize the number of bits sent from Merlin to Arthur. The ``small bits'' case is of course even more restrictive than the ``small rounds'' case. The latter reference shows that for any language $L$ that has an interactive proof with $b$ bits of communication, there is an $O(1)$-round interactive proof for $\overline{L}$ that uses only $\exp(b)$ communication. The authors also conjectured an ``Arthur-Merlin ETH'' that $\#$SAT does not have a $2^{o(n)}$-time AM-proof system with $O(1)$ rounds. What we report in this paper is rather far from disproving this ``AMETH'' conjecture, but it is interesting that some non-trivial progress can be made.

Goldwasser, Kalai, and Rothblum~\cite{GoldwasserKR15} study what they call \emph{delegating computation}, proving (for example) that for all logspace-uniform NC circuits $C$, one can prove that $C(x) = 1$ on an input $x$ of length $n$ with $\tilde{O}(n)$ verification time, $O(\log n)$ space, and $\poly(\log n)$ communication complexity between the prover and verifier. Despite the amazingly low running time and space usage, the protocols of this work are highly non-interactive: they need $\poly(\log n)$ \emph{rounds} between the prover and verifier as well.

Relating our work to proof complexity, Grochow and Pitassi~\cite{GrochowP14} introduced a new algebraic proof system based on axioms satisfied by any Boolean circuit that solves the polynomial identity testing problem. The proofs in their system can be efficiently verified by running a polynomial identity test, implying they can be viewed as proof of a Merlin-Arthur type. An intriguing property of their proof system is that super-polynomial lower bounds for it would prove lower bounds for the Permanent. 
    
The area of \emph{verifiable computation} (e.g.~\cite{Parno2013pinocchio}) is a new subject in cryptography, and is certainly related to our work. However, in crypto the work appears to be either very specific to particular functions, or it relies on very heavy machinery like probabilistically checkable proofs, or it relies on cryptographic hardness assumptions. 

In our setting, we want non-interactive proofs for batch computations that are \emph{shorter} than the computation time, with the typical ``perfect completeness'' and ``low error soundness'' conditions preserved, and which work unconditionally.

\section{Fast Multipoint Circuit Evaluation (With Merlin and Arthur)}

In this section, we give the proof system for multipoint arithmetic circuit evaluation:

\begin{theorem} \label{multipoint-eval} For every prime power $q$ and $\eps > 0$, $\MCE$ for $K$ points in $(\F_q)^n$ on an arithmetic circuit $C$ of $n$ inputs, $s$ gates, and degree $d$ has an MA-proof system where:
\begin{compactitem}
\item Merlin sends a proof of $O(K \cdot d \cdot \log(Kqd/\eps))$ bits, and  
\item Arthur tosses at most $\log(Kqd/\eps)$ coins, outputs $(C(a_1),\ldots,C(a_K))$ incorrectly with probability at most $\eps$, and runs in time $(K\cdot \max\{d,n\} + s \cdot \poly(\log s))\cdot \poly(\log (Kqd/\eps))$. 
\end{compactitem}
\end{theorem}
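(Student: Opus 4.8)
The plan is to build Merlin's proof as a single univariate polynomial $Q(x)$ over a suitable extension field $\F_{q^t}$ (with $q^t$ at least $\poly(Kd/\eps)$) that simultaneously encodes the $K$ inputs and simulates the circuit $C$ gate by gate. First I would fix distinct field elements $\beta_1,\dots,\beta_K \in \F_{q^t}$ and, for each input coordinate $j \in \{1,\dots,n\}$, interpolate a polynomial $R_j(x)$ of degree $\le K-1$ with $R_j(\beta_\ell) = a_\ell[j]$; by Theorem~\ref{interpolation} each $R_j$ costs $\tilde O(K)$ field operations, so all of them cost $\tilde O(K n)$. Now run $C$ symbolically with the inputs $R_1(x),\dots,R_n(x)$: the output is a polynomial $P(x)$ of degree at most $d\cdot(K-1)$, since $C$ has degree $d$, and $P(\beta_\ell)=C(a_\ell)$ for every $\ell$. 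The honest proof is $Q := P$, presented as its $\tilde O(Kd)$ coefficients; this is the ``unique proof'' of the claimed output vector, and from it Arthur recovers $(C(a_1),\dots,C(a_K))$ by fast multipoint evaluation (Theorem~\ref{multipoint-univariate}) at $\beta_1,\dots,\beta_K$ in $\tilde O(Kd)$ time.

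The soundness check is the crux. Arthur cannot afford to recompute $P$ from $C$ — that would cost roughly $s$ polynomial multiplications of degree $\Theta(Kd)$, i.e. $\tilde O(s\cdot Kd)$, far too slow. Instead Arthur picks a single random point $r \in \F_{q^t}$ and verifies the claimed polynomial at $x=r$ by evaluating the circuit on the \emph{scalars} $R_1(r),\dots,R_n(r) \in \F_{q^t}$: compute $R_j(r)$ for all $j$ in $\tilde O(Kn)$ total time, then run $C$ gate-by-gate on these $n$ scalars in $O(s)$ field operations to obtain the true value $P(r)$, and accept iff $Q(r) = P(r)$ (where $Q(r)$ is read off from Merlin's coefficients in $\tilde O(Kd)$ time). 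If Merlin's $Q$ is not identically $P$, then $Q - P$ is a nonzero polynomial of degree $\le dK$, so by Schwartz–Zippel it vanishes at a random $r$ with probability at most $dK/q^t \le \eps$; choosing $t$ so that $q^t \ge dK/\eps$, and noting that a uniform element of $\F_{q^t}$ costs $t\log q = \log(q^t) \le \log(Kqd/\eps)$ coins, gives the stated randomness and error bounds. Arithmetic in $\F_{q^t}$ carries a $\poly(\log(Kqd/\eps))$ overhead, which is absorbed into the $\tilde O$/$\poly(\log(\cdot))$ factors; the total verifier time is the claimed $(K\max\{d,n\}+s\cdot\poly(\log s))\cdot\poly(\log(Kqd/\eps))$, with the $s\cdot\poly(\log s)$ coming only if $C$'s gates have large fan-in (bounded fan-in gives a clean $O(s)$).

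I expect the main obstacle to be bookkeeping rather than conceptual: one must confirm that a degree-$d$ arithmetic circuit on inputs that are degree-$(K-1)$ polynomials really produces a polynomial of degree $\le d(K-1)$ — this uses the standard notion of the (formal/syntactic) degree of $C$ being $d$, so that along any gate the degree in $x$ is bounded by $(K-1)$ times the degree of that gate's subcircuit — and that Merlin is forced to send exactly the coefficient vector of $P$ (uniqueness of the proof is immediate since a polynomial of degree $< dK$ is determined by its evaluation behavior, but we should phrase the soundness so that \emph{any} accepted $Q$ must satisfy $Q(\beta_\ell)=C(a_\ell)$ for all $\ell$, which is what actually matters for correctness of the output). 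A secondary point to get right is the extension degree $t$: we need $q^t$ large enough for Schwartz–Zippel, an explicit irreducible polynomial of degree $t$ over $\F_q$ so that Arthur can do field arithmetic deterministically, and enough distinct elements $\beta_\ell$ (which needs $q^t \ge K$, already implied). None of these steps is hard individually; assembling them with the advertised running time — in particular keeping the circuit-simulation step at $O(s)$ scalar operations and all polynomial manipulations at $\tilde O(K\max\{d,n\})$ — is where the care goes.
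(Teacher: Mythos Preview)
Your proposal is correct and is essentially the same argument as the paper's: interpolate univariate polynomials $R_j$ (the paper's $\Psi_j$) encoding the $K$ inputs, define $P(x)=C(R_1(x),\dots,R_n(x))$ (the paper's $R$), have Merlin send its coefficients, and have Arthur test $Q(r)=C(R_1(r),\dots,R_n(r))$ at a random $r$ in a degree-$t$ extension with $q^t>dK/\eps$ before multipoint-evaluating $Q$ at the $\beta_\ell$. The only detail the paper makes explicit that you leave as a remark is the construction of the extension field: Merlin also sends an irreducible degree-$t$ polynomial over $\F_q$, and Arthur checks irreducibility deterministically via Kedlaya--Umans.
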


We have stated the theorem at this level of generality because we need good bounds on the parameters to obtain certain consequences. For example, in our proof system for quantified Boolean formulas (Theorem~\ref{QBF}), the parameters $s$, $K$, $q$, and $d$ are all various exponentials in $n$.

Because instances of $\MCE$ have length $O((K\cdot n + s \log s) \cdot \log q)$, the running time of Theorem~\ref{multipoint-eval} is essentially linear in the input length, up to the factor of $d$ in Merlin's proof (in general, $d$ could be much larger than $n$). So Theorem~\ref{multipoint-eval} is extremely powerful for arithmetic circuits of low degree.

\begin{proof} Let $q$ be a prime power and $C$ be an arithmetic circuit over $\F_q$ with degree $d$, $s$ gates, and $n$ variables. Let $a_1,\ldots,a_K \in \F_q^n$; we want to know $C(a_1),\ldots,C(a_K) \in \F_q$.

Let $\eps > 0$ be arbitrarily small, and let ${\ell}$ be the smallest integer such that $q^{\ell} > (d \cdot K)/\eps$. Let $F$ be the extension field $\F_{q^{\ell}}$. Note we can construct $\F_{q^{\ell}}$ rather quickly in the following way: Merlin can send an irreducible polynomial $f(x) \in \F_q[x]$ of degree $\ell$, and irreducibility of $f$ can be checked by running Kedlaya-Umans' deterministic irreducibility test in $\ell^{1+o(1)}\log^{2+o(1)} q$ time~(\cite{KedlayaU11}, Section 8.2). 

Since $q^{\ell} \leq (q \cdot K \cdot d)/\eps$, addition and multiplication in $F$ can be done in $(\log |F|)^{1+o(1)} \leq \log (Kqd/\eps)^{1+o(1)}$ time. Let $S \subseteq F$ be an arbitrary subset of cardinality $K$. For all $i=1,\ldots,K$, associate each vector $a_i \in (\F_q)^n$ with a unique element $\alpha_{i} \in S$, and inversely associate each $\alpha \in S$ with a unique vector $a_{\alpha} \in (\F_q)^n$. This mapping and its inverse can be easily constructed by listing the first $K$ elements of $F$ under some canonical ordering.

For all $j=1,\ldots,n$, we define $\Psi_j : F \rightarrow F$ as functions satisfying $\Psi_j(\alpha) = a_{\alpha}[j]$ for every $\alpha \in S$. That is, $\Psi_j(s)$ outputs the $j$th component of the vector $a_{\alpha} \in \F_q^n$ associated with $\alpha \in S$. Since each $\Psi_j$ is defined by $K$ input/output pairs, the $\Psi_j$ can be instantiated as polynomials of degree at most $K$. By efficient polynomial interpolation (Theorem~\ref{interpolation}), the degree-$K$ polynomials $\Psi_j(x) \in F[x]$ for all $j=1,\ldots,n$ can be constructed in $n \cdot K \cdot \poly(\log K)$ additions and multiplications.

Define the univariate polynomial $R(x) := C(\Psi_1(x),\ldots,\Psi_{n}(x))$ over $F$. By the construction of $\Psi_j$, we see that for all $i=1,\ldots,K$, $R(\alpha_{i}) = C(\Psi_1(\alpha_{i}),\ldots,\Psi_{n}(\alpha_{i})) = C(a_i[1],\ldots,a_i[n]) = C(a_i)$.  Furthermore, $\deg(R) \leq \deg(C) \cdot \left(\max_j \Psi_j\right) \leq d \cdot K$. 

Now we describe the protocol.

\begin{compactenum}
\item Merlin sends the coefficients of a polynomial $Q(x)$ over $F$ of degree at most $d \cdot K$, encoded in $d \cdot K \cdot \log(|F|)$ bits. Merlin claims that $Q(x) = R(x)$, as defined above. 

\item Arthur picks a uniform random $r \in F$ (taking at most $\log (Kqd/\eps)$ bits to describe), and wishes to check that \[Q(r) = R(x) := C(\Psi_1(r),\ldots,\Psi_{n}(r)),\] over $F$. Evaluating $Q(r)$ takes $d \cdot K \cdot (\log (Kqd/\eps))^{1+o(1)}$ time, by Horner's method. We claim that $R(r)$ can be computed in $(K\cdot n + s) \cdot (\log |F|)^{1+o(1)}$ time. First, the $n$ polynomials $\Psi_j$ of degree $K$ can be constructed in $n \cdot K \cdot \poly(\log K)$ additions and multiplications (as described above). Given the coefficients of the $\Psi_j$ polynomials, computing all values $v_j := \Psi_j(r)$ can be done straightforwardly in $O(K \cdot n)$ additions and multiplications, by producing the powers $r^0, r^1,\ldots,r^{K}$ and then computing $n$ linear combinations of these powers. (Note that each resulting value $v_j$ takes $O(\log |F|) \leq \poly(Kqd/\eps)$ bits to represent.) Then Arthur computes $C(v_1,\ldots,v_n)$ in $s \cdot \poly(\log s)$ additions and multiplications, by simple circuit evaluation over $F$. The total running time is $(K\cdot n + s \cdot \poly(\log s))\cdot (\log |F|)^{1+o(1)}$.

\item Arthur \emph{rejects the proof} if $Q(r) \neq C(v_1,\ldots,v_n)$; otherwise, he uses univariate multipoint evaluation (Theorem~\ref{multipoint-univariate}) to compute $(Q(\alpha_1),\ldots,Q(\alpha_K))$, in $K\cdot d \cdot \poly(\log (Kd)) \cdot (\log |F|)^{1+o(1)}$ time.
\end{compactenum}

On the one hand, if Merlin sends $Q(x) := R(x)$, then Arthur always outputs the tuple \[(R(\alpha_1),\ldots,R(\alpha_K)) = (C(a_1),\ldots,C(a_n)),\] regardless of the $r \in F$ chosen. On the other, if Merlin sends a ``bad'' polynomial $Q(x) \neq R(x)$ and Arthur fails to pick an $r \in F$ such that $Q(r) \neq R(r)$, then Merlin may convince Arthur of an incorrect $K$-tuple $(Q(\alpha_1),\ldots,Q(\alpha_K))$. However, since the degrees of $Q$ and $R$ are both at most $d \cdot K$, this failure of Arthur occurs with probability at most $(d \cdot K)/q^{\ell} < \eps$. \end{proof}

\subsection{Evaluating Sums Over Polynomials}

The multipoint evaluation protocol of Theorem~\ref{multipoint-eval} can be applied to perform a one-round ``sum-check'' faster than the obvious algorithm:

\begin{theorem}\label{VP-sim} Given a prime $p$, an $\eps > 0$, and an arithmetic circuit $C$ with degree $d$, $s \geq n$ gates, and $n$ variables, the sum \[\sum_{(b_1,\ldots,b_n) \in \{0,1\}^n} C(b_1,\ldots,b_n) \bmod p\] can be computed by a Merlin-Arthur protocol running in $2^{n/2} \cdot \poly(n,s,d,\log(p/\eps))$ time tossing only $n/2+O(\log(pd/\eps))$ coins, with probability of error $\eps$.
\end{theorem}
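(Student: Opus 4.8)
The plan is to combine the multipoint evaluation protocol of Theorem~\ref{multipoint-eval} with a "meet-in-the-middle" splitting of the Boolean cube, so that the exponentially large sum over $\{0,1\}^n$ is reorganized into a sum of $2^{n/2}$ circuit evaluations, each of which is one point of an MCE instance on $2^{n/2}$ points. Write $n = 2m$ (assume $n$ even; otherwise round), and split the variables into a "left" block $y \in \{0,1\}^m$ and a "right" block $z \in \{0,1\}^m$. Then
\[
\sum_{(b_1,\ldots,b_n)\in\{0,1\}^n} C(b) \;=\; \sum_{y\in\{0,1\}^m}\; \sum_{z\in\{0,1\}^m} C(y,z) \;=\; \sum_{y\in\{0,1\}^m} D(y),
\]
where $D(y) := \sum_{z\in\{0,1\}^m} C(y,z)$. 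The key point is that $D$ is itself computed by a single arithmetic circuit $C'$ on the $m$ variables $y$: take $2^m$ disjoint copies of $C$, the $i$-th copy with its $z$-inputs hardwired to the $i$-th string of $\{0,1\}^m$, and sum all the copies with an addition tree. This $C'$ has $s' = 2^m \cdot \poly(s)$ gates, degree $d' \le d$, and $m = n/2$ variables.

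The protocol then runs as follows. First, Merlin and Arthur invoke the MCE proof system of Theorem~\ref{multipoint-eval} on the circuit $C'$ over $\F_p$ (or a small extension, to guarantee unique decoding; take $\ell$ so that $p^\ell > d' K / \eps$ with $K = 2^m$), with the $K = 2^m$ evaluation points being exactly the strings of $\{0,1\}^m$ embedded in the field. Merlin sends the degree-$(d'K)$ "sketch" polynomial $Q$, which is $O(2^{n/2} d \log(pd/\eps))$ bits; Arthur tosses $\log(p^\ell K d'/\eps) = n/2 + O(\log(pd/\eps))$ coins to pick the random $r$, checks $Q(r) = C'(\Psi_1(r),\ldots,\Psi_m(r))$, and then recovers all $2^m$ values $D(y) = Q(\alpha_y)$ by fast multipoint evaluation. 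Finally Arthur simply adds up these $2^m$ values mod $p$ to obtain the claimed sum. Correctness and soundness are inherited directly from Theorem~\ref{multipoint-eval}: if Merlin is honest Arthur always gets the right tuple $(D(y))_y$ and hence the right sum; if $Q \ne R$ then with probability $\ge 1 - d'K/p^\ell > 1 - \eps$ Arthur catches it and rejects.

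The one place that needs care is the running time of Arthur. The dangerous term in Theorem~\ref{multipoint-eval} is the "$s \cdot \poly(\log s)$" cost of evaluating the circuit once at the random point, and here $s' = 2^m \poly(s)$, which would naively give $2^{n/2}\poly(s)$ — acceptable — but one must check that the other terms ($K \cdot \max\{d',m\}$ for building and evaluating the $\Psi_j$, and $K d' \poly(\log)$ for the final multipoint evaluation of $Q$) are all $2^{n/2}\poly(n,s,d,\log(p/\eps))$ as well, and they are since $K = 2^{n/2}$ and $d' \le d$. It is worth noting that Arthur does \emph{not} need to write down the circuit $C'$ explicitly; he only needs to evaluate it at the single field point $(\Psi_1(r),\ldots,\Psi_m(r))$, which he does by evaluating $C$ at $2^m$ points (one per setting of the $z$-block, with the $y$-inputs fixed to the field values) and summing — total cost $2^m \cdot s \cdot \poly(\log)$ over $F$. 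The main obstacle, then, is simply the bookkeeping: verifying that the "virtual" circuit $C'$ has the claimed parameters and that every step of the MCE verifier, when instantiated with $K = 2^{n/2}$ and $s' = 2^{n/2}\poly(s)$, stays within the $2^{n/2}\poly(n,s,d,\log(p/\eps))$ budget, including the $O(\log |F|) = O(n/2 + \log(pd/\eps))$ arithmetic-operation overhead from working in the extension field $F = \F_{p^\ell}$.
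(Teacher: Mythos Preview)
Your proposal is correct and follows essentially the same approach as the paper: split the variables in half, define the partial-sum circuit $C'(y)=\sum_{z\in\{0,1\}^{n/2}}C(y,z)$ of size $2^{n/2}\cdot\poly(s)$ and degree at most $d$, and apply Theorem~\ref{multipoint-eval} with $K=2^{n/2}$ to evaluate $C'$ on all Boolean points, then sum. Your write-up is in fact more explicit than the paper's about why Arthur need not materialize $C'$ and about the coin/time bookkeeping, but the underlying argument is identical.
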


Therefore, every polynomial in the class ${\sf VP}$ (\cite{Valiant79a,Valiant82}) has a MA-proof system that beats exhaustive search in a strong sense. 

\begin{proof} (of Theorem~\ref{VP-sim}) For simplicity, assume $n$ is even. Given an arithmetic circuit $C$ for which we wish to evaluate its sum over all Boolean inputs, define the $n/2$-variable circuit \[C'(x_1,\ldots,x_{n/2}) := \sum_{(b_1,\ldots,b_{n/2}) \in \{0,1\}^{n/2}} C(x_1,\ldots,x_{n/2},b_1,\ldots,b_{n/2}).\] Note that $\deg(C') = d$ and $\size(C') \leq 2^{n/2}\cdot s$. In order to compute the full sum of $C(b_1,\ldots,b_n)$ over all $2^n$ Boolean points, it suffices to evaluate $C'$ on all of its $K := 2^{n/2}$ Boolean points $a_1,\ldots,a_{K} \in \{0,1\}^n$. 

Applying the batch evaluation protocol of Theorem~\ref{multipoint-eval}, there is an MA-proof system where Merlin sends a proof of $2^{n/2} \cdot d \cdot \poly(n,\log (pd/\eps))$ bits, then Arthur tosses $n/2 + \log (pd/\eps)$ coins, runs in $(2^{n/2}\cdot\max\{n,d\} + 2^{n/2} \cdot s \cdot \poly(\log s))\cdot \poly(n,\log (pd/\eps))$ time, and outputs $(C'(a_1),\ldots,C(a_{2^{n/2}}))$ incorrectly with probability at most $\eps$. The result follows.
\end{proof}

Two important corollaries of Theorem~\ref{VP-sim} are $O^{\star}(2^{n/2})$-time proof systems for the Permanent and $\#$SAT problems. The result for Permanent follows immediately from Ryser's formula~\cite{Ryser63}, which shows that the permanent of any $n\times n$ matrix $M$ can be written in the form \[\sum_{(a_1,\ldots,a_n) \in \{0,1\}^n} C_M(a_1,\ldots,a_n),\] where $C_M$ is a $\poly(n)$-size arithmetic circuit of degree $O(n)$ that can be determined from $M$ in $\poly(n)$ time. We describe the $\#SAT$ protocol in detail:

\begin{theorem} \label{counting-SAT} For any $k > 0$, $\# SAT$ for Boolean formulas with $n$ variables and $m$ connectives has an MA-proof system using $2^{n/2}\cdot \poly(n,m)$ time with randomness $O(n)$ and error probability $1/\exp(n)$.
\end{theorem}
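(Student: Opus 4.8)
The plan is to reduce the problem to an instance of the Boolean-sum problem already solved by Theorem~\ref{VP-sim}, using a textbook arithmetization of the formula.

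First I would arithmetize. Given a Boolean formula $\phi$ on variables $x_1,\ldots,x_n$ with $m$ connectives, build an arithmetic circuit (indeed a formula) $C$ over $\Z$ by replacing each $\neg u$ by $1-u$, each $u \wedge v$ by $u \cdot v$, and each $u \vee v$ by $u + v - u\cdot v$. Then $C$ has $O(m)$ gates, and since $\phi$ is a tree its degree $d$ is at most the number of leaves, so $d = O(m)$; moreover $C(b) = \phi(b) \in \{0,1\}$ for every $b \in \{0,1\}^n$. Hence $\# SAT(\phi) = \sum_{b \in \{0,1\}^n} C(b)$, an integer lying in $[0, 2^n]$.

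Next I would fix a prime $p$ with $2^n < p \le 2^{n+1}$. Because the true count is strictly less than $p$, computing the sum modulo $p$ recovers $\# SAT(\phi)$ \emph{exactly} as an integer. In the Merlin--Arthur setting the prime $p$ can simply be sent by Merlin together with a Pratt-style primality certificate checkable in $\poly(n)$ time (alternatively Arthur runs a deterministic primality test); note $\log p = O(n)$. Then I would invoke Theorem~\ref{VP-sim} on the circuit $C$, the prime $p$, and error parameter $\eps := 2^{-n}$. That theorem supplies a Merlin--Arthur protocol that outputs $\sum_{b \in \{0,1\}^n} C(b) \bmod p$, running in $2^{n/2} \cdot \poly(n, s, d, \log(p/\eps)) = 2^{n/2}\cdot\poly(n,m)$ time, tossing $n/2 + O(\log(pd/\eps)) = O(n)$ coins, and erring with probability at most $\eps = 1/\exp(n)$. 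Arthur outputs this value as the claimed number of satisfying assignments of $\phi$.

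The only points needing care are (i) confirming that the arithmetization keeps both size and degree polynomial in $m$, which is immediate since a formula is a tree and hence its arithmetization has degree bounded by the leaf count; and (ii) choosing the modulus both large enough to recover the integer count and small enough ($\log p = O(n)$) that the coin and time bounds inherited from Theorem~\ref{VP-sim} come out as stated. Neither is a genuine obstacle: all the real work is in Theorem~\ref{VP-sim} (and thus in Theorem~\ref{multipoint-eval}), which we may assume.
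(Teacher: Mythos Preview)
Your proof is correct and follows the same template as the paper's --- arithmetize the formula, pick a prime $p>2^n$ (sent by Merlin with a Pratt certificate, exactly as the paper also allows), and invoke Theorem~\ref{VP-sim}. The one substantive difference is how you bound the degree of the arithmetized formula. The paper first applies Brent--Spira rebalancing to reduce the depth to $O(\log m)$ and then bounds the degree by $2^{\text{depth}} = m^{O(1)}$. You instead observe directly that, because a Boolean formula is a tree, the degree of its standard arithmetization is at most the number of leaves, hence at most $m+1$. This observation is correct (a one-line structural induction: $\mathrm{NOT}$ preserves degree, $\mathrm{AND}/\mathrm{OR}$ add degrees, leaves have degree~$1$), so your route is both simpler and yields a tighter degree bound than the paper's; the rebalancing step is not actually needed for this theorem.
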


\begin{proof} Let $F$ be a Boolean formula over AND, OR, and NOT with $n$ variables and $m$ connectives. First, any Boolean formula $F$ can be ``re-balanced'' as in the classical results of Brent~\cite{Brent74} and Spira~\cite{Spira71}, obtaining in $\poly(m)$  time a formula $F'$ equivalent to $F$, where $F'$ has depth at most $c\log m$ and at most $m^c$ connectives for some constant $c > 0$.

Next, we replace each AND, OR, and NOT gate of $F'$ with an equivalent polynomial of degree $2$, by the usual ``arithmetization.'' More precisely, each $OR(x,y)$ is replaced with $x + y - x \cdot y$, each $AND(x,y)$ is replaced with $x \cdot y$, and each $NOT(1-x)$ is replaced with $1-x$. The resulting arithmetic formula $P(x_1,\ldots,x_n)$ computes $F'(b_1,\ldots,b_n) = P(b_1,\ldots,b_n)$ for every $(b_1,\ldots,b_n) \in \{0,1\}^n$. Furthermore, due to the re-balancing step and the fact that every gate has outdegree $1$, we have $\deg(P) \leq 2^{c\log m} \leq m^{O(1)}$ (note the worst case is when every gate is an AND). 

Set $p > 2^n$ to be prime; note by Bertrand's postulate we may assume $p < 2^{n+1}$. We can always find such a prime deterministically in $2^{n/2+o(n)}$ time by an algorithm of Lagarias and Odlyzko~\cite{LagariasO87}. (Alternatively, the prover could send $p$ to the verifier, along with a deterministically verifiable $\poly(n)$-length proof of primality~\cite{Pratt75}.) Then $F'$ has exactly $r$ satisfying assignments if and only if \[\sum_{(b_1,\ldots,b_n) \in \{0,1\}^n} P(b_1,\ldots,b_n) = r \bmod p.\] Since $\deg(P) \leq m^{O(1)}$, we can apply Theorem~\ref{VP-sim} directly and obtain the result.
\end{proof}

Another corollary of Theorem~\ref{VP-sim} is that Merlin and Arthur can also count Hamiltonian cycles in $n$-node graphs in $O^{\star}(2^{n/2})$ time, by construing the inclusion-exclusion method of Karp~\cite{Karp82} running in $O^{\star}(2^n)$ time as a sum over $2^n$ Boolean values on an arithmetic circuit of $\poly(n)$ size. In particular, Karp's algorithm works by counting the $n$-step walks in a graph, then subtracting the count of $n$-step walks that miss at least one node, adding back the count of $n$-step walks that miss at least two nodes, etc. Each of these counts is computable by a single arithmetic circuit $C(y_1,\ldots,y_n)$ of $O(n^4)$ size which, on the input $y \in \{0,1\}^n$, counts the $n$-step walks over the subgraph of $G$ defined by the vector $y$ (negating the count if $y$ has an odd number of zeroes).

Theorem~\ref{counting-SAT} shows that Merlin and Arthur can count the number of satisfying assignments to Boolean formulas of $2^{\delta n}$ size in $2^{n(1/2+O(\delta))}$ time. It also immediately follows from Theorem~\ref{counting-SAT} that we can solve $\#$SAT on bounded fan-in circuits of depth $o(n)$ in $2^{n/2+o(n)}$ time, as such circuits can always be expressed as formulas of $\exp(o(n))$ size. It is also clear from the proof that we can trade off proof length and verification time: if we restrict the proofs to have length $2^\ell \leq 2^{n/2}$ (so that Merlin sends a polynomial of degree roughly $2^\ell$), then verifying the remaining sum over $n-\ell$ variables takes $O^{\star}(2^{n-\ell})$ time. 

We also observe that with more rounds of interaction, Merlin and Arthur can use shorter proofs. This is somewhat expected, because it is well-known that in $O(n)$ rounds, we can compute $\#$SAT with $\poly(n)$ communication and $\poly(n)$ verification time~\cite{Lund-Fortnow-Karloff-Nisan90}.

\begin{theorem} \label{counting-SAT2} For any $k > 0$, and $c > 2$, $\# SAT$ for Boolean formulas with $n$ variables and $m$ connectives has an interactive proof system with $c$ rounds of interaction, using $2^{n/(c+1)}\cdot \poly(n,m)$ time with randomness $O(n)$ and error probability $1/\exp(n)$.
\end{theorem}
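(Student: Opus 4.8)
The plan is to generalize the $c=2$ (three-round) construction of Theorem~\ref{QBF} and the two-round ($c=1$, MA) construction of Theorem~\ref{counting-SAT} by iterating the ``split the variables and delegate one half'' idea $c$ times. First I would set up notation exactly as in the proof of Theorem~\ref{counting-SAT}: rebalance $F$ into an equivalent depth-$O(\log m)$ formula, arithmetize it over a prime $p > 2^n$ (found deterministically within the time budget, or certified by Merlin via a Pratt certificate), obtaining an arithmetic circuit $P(x_1,\ldots,x_n)$ of degree $m^{O(1)}$ and size $\poly(m)$, so that $\#\mathrm{SAT}(F) = \sum_{b \in \{0,1\}^n} P(b) \bmod p$. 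Partition the $n$ variables into $c+1$ blocks of size $n/(c+1)$ each (assume $(c+1) \mid n$ for cleanliness; the general case costs only $\poly(n)$ overhead).

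The heart of the argument is an inductive ``fold'' of the partial-sum polynomial. For the first $c-1$ rounds, I would have Arthur send fresh random evaluation points and Merlin respond with a low-degree univariate (or, if needed to keep degrees from blowing up, a multilinear-in-each-already-summed-block) polynomial that encodes the sum of $P$ over one block of $2^{n/(c+1)}$ Boolean assignments, with the remaining unsummed blocks treated as formal variables; Arthur checks consistency at the previously fixed random point via the Schwartz--Zippel lemma (error $\deg/p = 2^{-\Omega(n)}$ per round, union-bounded over the $c = O(1)$ rounds). After these rounds the problem is reduced to evaluating a single circuit $C'$ of degree $m^{O(1)}$, size $2^{n/(c+1)}\cdot\poly(m)$, on $n/(c+1)$ remaining variables, summed over $\{0,1\}^{n/(c+1)}$ — equivalently, batch-evaluating $C'$ on its $K := 2^{n/(c+1)}$ Boolean inputs. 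The last two rounds are then precisely the MA-proof system of Theorem~\ref{multipoint-eval} applied to this $C'$: Merlin sends the degree-$Kd$ sketch polynomial $Q$, Arthur spot-checks it at one random point (cost dominated by a single evaluation of $C'$, i.e.\ $2^{n/(c+1)}\cdot\poly(m)$ time) and then reads off the $K$ values by fast multipoint evaluation. Summing the costs, every round runs in $2^{n/(c+1)}\cdot\poly(n,m)$ time, the randomness is $O(n)$ total, and the total error is $c/\exp(n) = 1/\exp(n)$.

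The main obstacle I anticipate is controlling the \emph{degree} of the intermediate polynomials Merlin sends. Naively, summing $P$ over one block of variables while leaving the others free keeps the degree at $m^{O(1)}$ in each remaining variable, which is fine for a \emph{single} such fold (as in Theorems~\ref{QBF} and~\ref{counting-SAT}), but iterating it $c$ times could in principle compound — and more importantly, the ``sketch'' trick of Theorem~\ref{multipoint-eval} turns an $n'$-variable degree-$d$ circuit into a univariate polynomial of degree $d \cdot 2^{n'}$, so one must make sure the sketched object at each stage still has degree only $2^{n/(c+1)}\cdot\mathrm{poly}$ and not $2^{2n/(c+1)}$ or worse. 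The clean way to avoid this is to keep the per-round object \emph{univariate}: at round $i$, Merlin sends a univariate polynomial $R_i(x)$ of degree $\le 2^{n/(c+1)}\cdot d$ that is the Babai--Fortnow--Levin--Szegedy-style univariate encoding (via the interpolating maps $\Psi_j$) of the sum of $P$ over block $i$, with all \emph{later} blocks instantiated at the random points already chosen in rounds $1,\dots,i-1$ and block $i$'s variables being what the sketch ranges over; Arthur checks $R_{i-1}$ at a random point by computing the corresponding circuit value just as in step~2 of the proof of Theorem~\ref{multipoint-eval}. This keeps every polynomial's degree at $2^{n/(c+1)}\cdot m^{O(1)}$ throughout, so all the fast univariate-polynomial subroutines (Theorems~\ref{multipoint-univariate} and~\ref{interpolation}) apply at cost $2^{n/(c+1)}\cdot\poly(n,m)$.

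The remaining steps are routine bookkeeping: verifying perfect completeness (an honest Merlin's messages always pass every check and yield the correct final $\#\mathrm{SAT}$ value, independent of Arthur's coins), verifying soundness (if any $R_i$ Merlin sends is not the true folded polynomial, then with probability $\ge 1 - \deg/p \ge 1 - 2^{-\Omega(n)}$ Arthur's random-point check in round $i{+}1$ catches it; here one uses that $p > 2^n$ dominates the degree $2^{n/(c+1)}\cdot m^{O(1)}$ since $m \le 2^{o(n)}$ — or, if $m$ is large, one picks $p$ a bit larger, still $2^{O(n)}$, costing only $\poly$), and tallying the randomness as $n/(c+1)$ bits for the Boolean block indexing plus $O(\log p) = O(n)$ bits per round, total $O(n)$. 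I would then note that $c=1$ recovers Theorem~\ref{counting-SAT} ($2^{n/2}$) and $c=2$ recovers the QBF-style $2^{n/3}$ trade-off, so Theorem~\ref{counting-SAT2} is the natural interpolation, matching the intuition already flagged in the text that ``more rounds $\Rightarrow$ shorter proofs.''
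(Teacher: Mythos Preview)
Your proposal is correct and matches the paper's approach: partition the variables into $c+1$ blocks, use the $\Psi_j$ interpolating maps to encode each block univariately, and run an LFKN-style sum-check where each of the $c$ rounds handles one block via a univariate polynomial of degree $2^{n/(c+1)}\cdot m^{O(1)}$, with Arthur directly evaluating the sum over the last block. Your round bookkeeping is slightly off (the final MA step is one round, not two, and the description of which blocks are ``summed'' versus ``instantiated at earlier random points'' is garbled), but the argument and its analysis are the same as the paper's.
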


\begin{proof} (Sketch) We essentially interpolate between our protocol and the LFKN protocol for $\# SAT$. Let $F$ be a Boolean formula over AND, OR, and NOT with $n$ variables and $m$ connectives, and let $P$ be its arithmetization as in Theorem~\ref{counting-SAT}. We will work modulo a prime $p > 2^n$, as before. For simplicity let us assume $n$ is divisible by $c+1$, and that $m \leq 2^{o(n)}$. Partition the set of variables into subsets $S_1,\ldots,S_{c+1}$ of $n/(c+1)$ variables each. Via interpolation, define the polynomials $\Psi_1,\ldots,\Psi_{\frac{n}{c+1}}$ analogously to Theorem~\ref{multipoint-eval}, where for all $j \in \{0,1,\ldots,2^{n/(c+1)}-1\}$, $\Psi_i(j)$ outputs the $i$th bit of the $j$ in $n/(c+1)$-bit binary representation. Now consider the polynomial in $c+1$ variables:\[Q_1(y) := \sum_{\substack{j_2,\ldots,j_{c+1} \\\in \{0,1,\ldots,2^{n/(c+1)}-1\}}} P(\Psi_1(y), \ldots,\Psi_{\frac{n}{c+1}}(y),\Psi_1(j_2),\ldots,\Psi_{\frac{n}{c+1}}(j_2),\ldots\ldots,\Psi_{\frac{n}{c+1}}(j_{c+1})).\] In the first round of interaction, an honest prover sends $Q_1(y)$, which has degree $2^{n/(c+1)+o(n)}$. The verifier then chooses a random $r_1 \in \F_p$, and sums $Q_1(y)$ over all points $\{0,1,\ldots,2^{n/(c+1)}-1\}$. 

In the $k$th round of interaction for $k=2,\ldots,c$, the honest prover sends the $2^{n/(c+1)+o(n)}$-degree polynomial \[Q_k(y) := \sum_{\substack{j_{k+1},\ldots,j_{c+1}\\ \in \{0,1,\ldots,2^{n/(c+1)}-1\}}} P(\Psi_1(r_1),\ldots,\Psi_{\frac{n}{c+1}}(r_{k-1}),\Psi_1(y),\ldots,\Psi_{\frac{n}{c+1}}(y),\Psi_{1}(j_{c+1})\ldots,\Psi_{\frac{n}{c+1}}(j_{c+1})).\] The verifier again chooses a random $r_k \in \F_p$. 

Finally in the $c$th round, after the prover has sendt $Q_c(y)$ and the verifier has chosen $r_c \in \F_p$ at random, the remaining computation is to compute the sum $\sum_{i=0}^{2^{n/(c+1)}-1} Q_c(j_i)$, and to verify that \[\sum_{\substack{j_{c+1} \\\in \{0,1,\ldots,2^{n/(c+1)}-1\}}} 
P(\Psi_1(r_1),\ldots,\Psi_{\frac{n}{c+1}}(r_1),\ldots,\Psi_1(r_c),\ldots,\Psi_{\frac{n}{c+1}}(r_c),\Psi_1(j_{c+1}),\ldots,\Psi_{\frac{n}{c+1}}(j_{c+1})) 
= Q_c(r_c).\] In each of the $c$ rounds, the chance of picking a ``bad'' $r_i$ is at most $2^{\frac{n}{c+1}+o(n)}/p \leq \exp(-\Omega(n))$.
\end{proof}

Thus, with $\omega(1)$ rounds of interaction, Arthur and Merlin can compute $\# SAT$ in $2^{o(n)}$ verification time and communication.

\subsection{Univariate Polynomial Identity Testing and the Nondeterministic SETH} \label{upit-nseth}
A nice aspect of Theorem~\ref{VP-sim} and its corollaries is that the randomness is low: for example, the obvious derandomization strategy of simulating all $O^{\star}(2^{n/2})$ coin tosses recovers a nondeterministic $O^{\star}(2^n)$ time algorithm for counting SAT assignments modulo $2$. 

The proof system itself motivates the following problem. Let \emph{univariate polynomial identity testing} (UPIT) be the problem of testing identity for two arithmetic circuits with \emph{one} variable, degree $n$, and $O(n)$ wires, over a field of order $\poly(n)$. The following corollary is immediate from the proofs of Theorems~\ref{main-multipoint},~\ref{counting-SAT}, and the above observations:

\begin{corollary} \label{UPIT-NSETH} If $\text{UPIT} \in \NTIME[n^{2-\eps}]$ for some $\eps > 0$, then $\#$Circuit-SAT for $o(n)$-depth circuits is computable in nondeterministic $2^{n(1-\eps/2)+o(n)}$ time. By~\cite{Williams11,JahanjouMV15,CarmosinoGIMPS15}, this further implies that $\E^{\NP}$ does not have $2^{o(n)}$-size sublinear-depth circuits.
\end{corollary}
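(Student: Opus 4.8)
The plan is to read the Merlin--Arthur protocol behind Theorem~\ref{counting-SAT} (through Theorems~\ref{VP-sim} and~\ref{multipoint-eval}) as, at its core, a single univariate polynomial identity test: the honest Merlin sends a degree-$dK$ polynomial $Q$, Arthur's only probabilistic step is to check $Q\equiv R$ for an \emph{explicit} one-variable arithmetic circuit $R$, and once that passes Arthur just reads off the claimed count by fast multipoint evaluation of $Q$ (Theorem~\ref{multipoint-univariate}). Replacing Arthur's random-evaluation identity check by a \emph{nondeterministic} UPIT algorithm therefore turns the whole protocol into a nondeterministic algorithm whose running time is dominated by the UPIT call on an instance whose degree is $dK$.

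First I would pin down the parameters. Let $C$ be a Boolean circuit on $n$ variables of depth $o(n)$ and bounded fan-in; then $C$ has size $2^{o(n)}$, and its arithmetization $P$ (as in Theorem~\ref{counting-SAT}, modulo a prime $p\in(2^n,2^{n+1})$ that can be found in $2^{n/2+o(n)}$ time or guessed and verified via~\cite{Pratt75}) has degree $d=2^{o(n)}$ and size $2^{o(n)}$. Folding half the variables into $C'(x_1,\dots,x_{n/2}):=\sum_{b\in\{0,1\}^{n/2}}P(x,b)$ as in Theorem~\ref{VP-sim} yields a circuit of degree $d$ and size $2^{n/2}\cdot s=2^{n/2+o(n)}$, and the problem becomes evaluating $C'$ on its $K:=2^{n/2}$ Boolean points. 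In the protocol of Theorem~\ref{multipoint-eval} applied to $C'$, the honest proof $Q$ and the target $R(x)=C'(\Psi_1(x),\dots,\Psi_{n/2}(x))$ both have degree $N:=dK=2^{n/2+o(n)}$; moreover $R$ is computed by a one-variable arithmetic circuit with $O(nK)+2^{n/2}s=2^{n/2+o(n)}$ wires (Horner's rule for each interpolated $\Psi_j$, composed with the circuit for $C'$), over a field of order $2^{O(n)}=\poly(N)$. Hence ``$Q\equiv R$'' is a legitimate UPIT instance of size $N^{1+o(1)}=2^{n/2+o(n)}$.

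Then I would assemble the nondeterministic algorithm for $\#$Circuit-SAT: on input $(C,r)$, fix $p$ as above, nondeterministically guess the coefficient vector of $Q$, run the assumed $\NTIME[m^{2-\eps}]$ algorithm for UPIT to test $Q\equiv R$, and, if it accepts, compute $\sum_{i=1}^{K}Q(\alpha_i)\bmod p$ by fast multipoint evaluation (Theorem~\ref{multipoint-univariate}) and accept iff this equals $r$. Completeness: if $r$ is the true count then $Q:=R$ is an accepting guess, since $\sum_i R(\alpha_i)=\sum_{b\in\{0,1\}^n}P(b)=r\bmod p$. Soundness: any guessed $Q$ either has $Q\not\equiv R$, in which case the UPIT subroutine has no accepting computation, or has $Q\equiv R$, in which case the computed sum is the true count and the algorithm accepts only if $r$ is correct. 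The running time is $2^{n/2+o(n)}$ for the deterministic parts plus $\big(2^{n/2+o(n)}\big)^{2-\eps}=2^{n(1-\eps/2)+o(n)}$ for the UPIT call, which gives the first statement of the corollary. The second statement then follows by feeding this nondeterministic $2^{n-\Omega(n)}$-time algorithm for $\#$Circuit-SAT on sublinear-depth circuits into the ``fast SAT algorithms imply circuit lower bounds'' framework of~\cite{Williams11,JahanjouMV15,CarmosinoGIMPS15}, yielding $\E^{\NP}\not\subseteq$ $2^{o(n)}$-size sublinear-depth circuits.

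The main obstacle is nothing deep but the parameter bookkeeping in the middle step: one has to verify that the single-variable circuit for $R$ has degree \emph{and} wire count both $2^{n/2+o(n)}$, so that it is a valid UPIT instance, and that the subexponential slack coming from the depth-$o(n)$ arithmetization, the $O(n)$-bit field elements, and the auxiliary prime $p$ are all absorbed into the $o(n)$ term of the exponent, so that $N^{2-\eps}$ genuinely evaluates to $2^{n(1-\eps/2)+o(n)}$ (in particular, it is essential to fold exactly half the variables into $C'$ so that $N=2^{n/2+o(n)}$ rather than $2^{n+o(n)}$). The ``algorithms $\Rightarrow$ lower bounds'' implication itself is invoked as a black box.
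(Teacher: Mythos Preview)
Your proposal is correct and follows essentially the same approach as the paper, which treats the corollary as immediate and gives only a two-sentence justification: Arthur's randomized check in the protocol of Theorem~\ref{counting-SAT} is exactly a UPIT instance on two univariate circuits of degree and size $2^{n/2+o(n)}$, so a nondeterministic $m^{2-\eps}$ algorithm for UPIT derandomizes the verification in $2^{n(1-\eps/2)+o(n)}$ time. You have simply spelled out the parameter bookkeeping (degree $N=dK=2^{n/2+o(n)}$, wire count $O(nK)+2^{n/2}s=2^{n/2+o(n)}$, field order $p\approx 2^n=\poly(N)$) and the completeness/soundness structure (guess $Q$ in coefficient form, nondeterministically verify $Q\equiv R$, then read off the sum via Theorem~\ref{multipoint-univariate}) that the paper leaves implicit.
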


In particular, the randomized verification task of Arthur in the protocol of Theorem~\ref{counting-SAT} directly reduces to solving UPIT on two univariate circuits of degree $2^{n/2+o(n)}$ and size $2^{n/2+o(n)}$. Hence, assuming the hypothesis of Corollary~\ref{UPIT-NSETH}, Arthur's verification can be performed deterministically in $2^{n(1-\eps/2)+o(n)}$ time.
 
This is an intriguing example of how derandomization \emph{within polynomial time} can imply strong circuit lower bounds: it is easy to see that UPIT is solvable in $\tilde{O}(n)$ time with randomness, and in $\tilde{O}(n^2)$ time deterministically, by efficient interpolation on $n+1$ distinct points (Theorem~\ref{interpolation}). In all other cases we are aware of (such as \cite{KI04,Williams10}), the necessary derandomization problem is only known to be solvable in deterministic \emph{exponential} time. Thus, the Nondeterministic SETH predicts that the exponent of the simple $\tilde{O}(n^2)$ algorithm for UPIT cannot be improved, even with nondeterminism.

\section{Quantified Boolean Formulas}

In the previous section, we saw how generic $\#P$ counting problems can be certified faster than exhaustive search. We can also give less-than-$2^n$ time three-round proof systems for certifying quantified Boolean formulas, a $\PSPACE$-complete problem. Our quantified Boolean formulas have the form \[(Q_1 x_1) \cdots (Q_n x_n) F(x_1,\ldots,x_n),\] where $F$ is an arbitrary formula on $m$ connectives, and each $Q_i \in \{\exists,\forall\}$.

\begin{reminder}{Theorem~\ref{QBF}}
Quantified Boolean Formulas with $n$ variables and $m \leq 2^n$ connectives have a three-round interactive proof system running in $2^{2n/3} \cdot \poly(n,m)$ time with $O(n)$ bits of randomness.
\end{reminder}

\begin{proof} Let $\phi = (Q_1 x_1) \cdots (Q_n x_n) F(x_1,\ldots,x_n)$ be a quantified Boolean formula to certify. Let $\delta > 0$ be a parameter to set later. First, convert the propositional formula $F'$ to an equivalent arithmetic circuit $P$ of $\poly(m)$ degree and size, as in Theorem~\ref{counting-SAT}. Note that $P$ outputs $0$ or $1$ on every Boolean input to its variables. Next, determine whether the quantifier suffix $(Q_{n-\delta n+1} x_{n-\delta n + 1}) \cdots (Q_n x_n)$ contains at least as many existential quantifiers as universal quantifiers. 

$\bullet$ If there are more existentially quantified variables, convert the subformula \[\phi'(x_1,\ldots,x_{n-\delta n}) = (Q_{n-\delta n+1} x_{n-\delta n+1}) \cdots (Q_n x_n)P(x_1,\ldots,x_n)\] into an arithmetic formula $P'$ in a standard way, where each $(\exists x_i)$ is replaced by a sum over $x_i \in \{0,1\}$, and each $(\forall x_i)$ is replaced by a product over $x_i \in \{0,1\}$. The formula $P'$ has size $2^{\delta n}\cdot \poly(m)$, for the tree of possible assignments to the last $2^{\delta n}$ variables times the size of the polynomial $P$. 

It is easy to see that $P'(a_1,\ldots,a_{n-\delta n})$ is nonzero (over $\Z$) on a Boolean assignment $(a_1,\ldots,a_{n-\delta n})$ if and only if $\phi'(a_1,\ldots,a_{n-\delta n})$ is true. Moreover, $P'$ has degree at most $\poly(m) \cdot 2^{\delta n/2}$, since there are most $\delta n/2$ universal quantifiers among the last $\delta n$ variables (so the $2^{\delta n}$ tree contains at most $\delta n/2$ layers of multiplication gates). Note the value $V_{a_1,\ldots,a_{n-\delta n}} = P'(a_1,\ldots,a_{n-\delta n})$ is always at most $(2^n \cdot m)^{O(2^{\delta n/2})}$. 

Our protocol begins by having Arthur send a random prime $p$ from the interval $[2,2^{2n^2}\cdot m]$ to Merlin, to help reduce the size of the values $V_{a_1,\ldots,a_{n-\delta n}}$. (A similar step also occurs in the proof that ${\sf IP} = \PSPACE$~\cite{Shamir90,Shen92}.) Since a nonzero $V_{a_1,\ldots,a_{n-\delta n}}$ has at most $O(2^{\delta n/2}nm)$ prime factors, the probability that a random $p \in [2,2^{2n^2}\cdot m]$ divides a fixed $V_{a_1,\ldots,a_{n-\delta n}}$ is at most \[\frac{O(2^{\delta n/2}n(n+\log m))}{2^{2n^2}},\] by the Prime Number Theorem. By the union bound, $p$ divides $V_{a_1,\ldots,a_{n-\delta n}}$ for some $a_1,\ldots,a_{n-\delta n} \in \{0,1\}$ with probability at most $(\log m)/2^{\Omega(n^2)}$. Therefore for all $a_1,\ldots,a_{n-\delta n} \in \{0,1\}$, the ``non-zeroness'' of $P'(a_1,\ldots,a_{n-\delta n})$ over $\Z$ is preserved over the field $\F_p$, with high probability. Merlin and Arthur will work over $\F_p$ in the following.

Applying Theorem~\ref{multipoint-eval} to $P'$ with $d := \poly(m) \cdot 2^{\delta n/2}$, $p := 2^{2n}\cdot m$, $K := 2^{n-\delta n}$, and $s := \poly(m) \cdot 2^{\delta n}$, there is an MA-proof system where Merlin sends a proof of length at most $2^{n-\delta n/2} \cdot \poly(n)$ bits, while Arthur uses at most $\poly(n)$ coins and $(2^{n-\delta n/2} + 2^{\delta n})\cdot \poly(n,m)$ time, outputting the value of $P'$ on all $2^{n-\delta n}$ Boolean inputs with high probability. It is easy to determine the truth value of the original QBF $\phi$ from the $2^{n-\delta n}$-length truth table of $P'$; this is simply a formula evaluation on an $O(2^{n-\delta n})$-size formula defined by the quantifier prefix $(Q_1 x_{1}) \cdots (Q_{n-\delta n} x_{n-\delta n})$.

Setting $\delta = 2/3$ yields a $2^{2n/3}\cdot \poly(n,m)$-length proof and an analogous running time bound. 

$\bullet$ If there are at least as many universal variables as existential ones, then Merlin and Arthur decide to prove that $\neg \phi$ is false, by flipping the type of every quantifier (from existential to universal, and vice-versa) and replacing $P$ with an arithmetic circuit for $\neg F$. Now the quantifier suffix $(Q'_{n-\delta n+1} x_{n-\delta n + 1}) \cdots (Q'_n x_n)$ of the new QBF contains more existential quantifiers than universal ones, and we proceed as in the first case, evaluating an $(n-\delta n)$-variable formula of $2^{\delta n}$ size (and at most $\delta n/2$ universally quantified variables) on all of its possible assignments, and inferring the truth or falsity of the QBF from that evaluation.\end{proof} 

\section{Conclusion}

By a simple but powerful protocol for batch multipoint evaluation, we have seen how non-interactive proof systems can be exponentially more powerful than randomized or nondeterministic algorithms, assuming some exponential-time hypotheses. There are many questions left to pursue, for instance:

\begin{compactitem}
\item Are there more efficient proof systems if we just want to prove that a formula is UNSAT? Perhaps UNSAT has an MA-proof system of $O^{\star}(2^{n/3})$ time. Perhaps Parity-SAT could be certified more efficiently, exploiting the nice properties of characteristic-two fields? By the Valiant-Vazirani lemma~\cite{Valiant-Vazirani86}, this would imply a three-round interactive proof system for UNSAT that is also more efficient. Our MA-proof systems all have extremely low randomness requirements of Arthur. If we allowed $2^{\delta n}$ bits of randomness for some $\delta > 0$, perhaps they can be improved further. 

\item Faster nondeterministic UNSAT algorithms are by now well-known to imply circuit lower bounds for problems in nondeterministic exponential time~\cite{Williams10,JMV13}. Can the proof systems of this paper be applied to conclude new lower bounds? One difficulty is that we already know ${\sf MAEXP} \not\subset \P/\poly$~\cite{Buhrman-Fortnow-Thierauf98}. More seriously, it seems possible that one could apply our protocol for $\#$SAT on circuits of $o(n)$ depth to show that (for instance) $\E^{\text{\sf promise}\MA}$ does not have $2^{o(n)}$ size formulas; this would be a major advance in our understanding of exponential-size circuits. 

\item Can $O^{\star}(2^{n/2})$-time Merlin-Arthur proof system for $\#SAT$ be converted into a construction of nondeterministic circuits of $(2-\eps)^n$ size for UNSAT? To do this, we would want to have a small collection of coin tosses that suffices for verification. If we convert the proof system into an Arthur-Merlin game in the standard way, the protocol has the following structure: for a proof-length parameter $\ell$, we can toss $O(\ell \cdot n)$ random coins are tossed prior to the proof, then Merlin can give a single $\tilde{O}(\ell)$-bit proof of the protocol that needs to be simulated on $O(\ell)$ different coin tosses of $n/2+\tilde{O}(1)$ bits each. The difficulty is that each of these $O(\ell)$ coin tosses takes $\Omega(2^n/\ell)$ time for Arthur to verify on his own, as far as we can tell. So even though the probability of error here could be extremely small (less than $1/2^{\Omega(\ell)}$) we do not know how to get a $(2-\eps)^n$ time algorithm for verification.

\item Does QBF on $n$ variables and $\poly(n)$ connectives have an MA-proof system using $(2-\eps)^n$ time, for some $\eps > 0$?
\end{compactitem}

\paragraph{Acknowledgements.} I thank Russell Impagliazzo for sending a draft of his paper (with coauthors) on NSETH, MASETH, and AMSETH, and for discussions on the $\#$SAT protocol. I also thank Petteri Kaski for suggesting that I add a protocol for Closest Pair and Hamiltonian Cycle, and Shafi Goldwasser for references.

\bibliographystyle{alpha}
\bibliography{cc-papers}

\appendix

\section{Quick Proof Systems For Some Poly-Time Problems}\label{ov-section}

We can also obtain nearly-linear time MA-proof systems for quite a few problems which have been conjectured to be hard to solve faster than quadratic time. Perhaps the most illustrative example is a proof system for computing orthogonal pairs of vectors. Via reductions, this result implies analogous proof systems for several other quadratic-time solvable problems (see~\cite{AbboudWY15}); we omit the details here.

\begin{theorem} \label{ov2} Let $d \leq n$. For every $A \subseteq \{0,1\}^d$ such that $|A|=n$, there is a MA-proof system certifying for every $v \in A$ if there is a $u \in A$ such that $\langle v,u\rangle = 0$, with $\tilde{O}(n\cdot d)$ time and error probability $1/\poly(n)$.
\end{theorem}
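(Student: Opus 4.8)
The plan is to reduce the task to a single instance of multipoint circuit evaluation and invoke Theorem~\ref{multipoint-eval}. First I would fix a prime $p$ with $n < p < 2n$ (one exists by Bertrand's postulate); Arthur can find it deterministically in $\tilde{O}(n) \subseteq \tilde{O}(nd)$ time by primality-testing the $O(n)$ candidates, or Merlin may simply send $p$ together with a $\poly(\log n)$-verifiable Pratt certificate. All arithmetic is done over $\F_p$ (or over the small extension of it supplied internally by the protocol of Theorem~\ref{multipoint-eval}). The one conceptual ingredient is the standard arithmetization of orthogonality: for $v,u \in \{0,1\}^d$ we have $\langle v,u\rangle = 0$ as an integer --- equivalently mod $p$, since the inner product lies in $\{0,\ldots,d\}$ and $p > d$ --- if and only if $\prod_{i=1}^{d}(1 - v[i]u[i]) = 1$, while the product is $0$ otherwise.

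Next I would define the arithmetic circuit $C(x_1,\ldots,x_d) := \sum_{u \in A}\prod_{i=1}^{d}(1 - x_i u[i])$ over $\F_p$. For every $v \in \{0,1\}^d$ this gives $C(v) = \#\{u \in A : \langle v,u\rangle = 0\}$ evaluated in $\F_p$; since the true count lies in $\{0,\ldots,n\} \subseteq \{0,\ldots,p-1\}$, the element $C(v) \in \F_p$ is exactly that count, and in particular $C(v) \neq 0$ iff some $u \in A$ is orthogonal to $v$. The circuit $C$ has $d$ variables, degree at most $d$ (each monomial $\prod_{i:u[i]=1}(1-x_i)$ has degree $|u|_1 \le d$), and $s = O(nd)$ gates, and Arthur can write it down directly from $A$ in $\tilde{O}(nd)$ time. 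Now I would apply Theorem~\ref{multipoint-eval} with the $K := n$ evaluation points taken to be the vectors of $A$ themselves and with $\eps := 1/\poly(n)$: Merlin sends the degree-$\le nd$ univariate sketch polynomial $Q$, of $O(nd \log(ndp/\eps)) = \tilde{O}(nd)$ bits, and Arthur tosses $\tilde{O}(1)$ coins and in time $\bigl(K\cdot\max\{d,d\} + s\cdot\poly(\log s)\bigr)\cdot\poly(\log(ndp/\eps)) = \tilde{O}(nd)$ either rejects or outputs the tuple $(C(v))_{v \in A}$ correctly with probability at least $1-\eps$. From these values Arthur reads off, for each $v \in A$, whether an orthogonal $u \in A$ exists. (This in fact already yields the per-$v$ count, and hence the total number of orthogonal pairs of Theorem~\ref{ov} after one extra summation.)

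There is no serious obstacle here beyond correctly plugging the circuit into Theorem~\ref{multipoint-eval}; the one point requiring care is the parameter bookkeeping that keeps verification at $\tilde{O}(nd)$ rather than something worse --- specifically, that $C$ truly has only $d$ variables and degree $\le d$ (so the $K\cdot\max\{d, \text{\#vars}\}$ term is $nd$, not $n^2$ or $nd^2$), that $s = O(nd)$, and that the underlying field has size $\poly(n)$ so each field operation costs only $\poly(\log n)$. Taking $\eps = 1/\poly(n)$ then delivers the stated error bound, completing the proof.
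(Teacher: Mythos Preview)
Your proposal is correct and is essentially identical to the paper's proof: both arithmetize orthogonality via $\prod_i(1-x_iu[i])$, sum over $u\in A$ to obtain a size-$O(nd)$, degree-$O(d)$ circuit in $d$ variables counting orthogonal partners, and then invoke Theorem~\ref{multipoint-eval} on the $K=n$ points of $A$. The only cosmetic differences are that the paper takes $p>n^2 d$ (larger than necessary) and bounds the degree by $2d$ rather than your sharper $d$; neither affects the $\tilde O(nd)$ outcome.
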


\begin{proof} Let $p$ be a prime greater than $n^2 \cdot d$. 
Define the $2d$-variable polynomial \[P(x_{1},\ldots,x_{d},y_1,\ldots,y_d) := \prod_{i=1}^d  \left(1-x_{i}\cdot y_{i}\right).\] Observe $\deg(P) \leq 2d$, and for a pair of Boolean vectors $u,v \in \{0,1\}^d$, $P(u,v) = 1$ if $\langle u,v\rangle = 0$, otherwise $P(u,v) = 0$. Then, the polynomial \begin{align*} P'(u[1],\ldots,u[d]) := \sum_{j=1,\ldots,n} P(u[1],\ldots,u[d],v_j[1],\ldots,v_j[d])\end{align*} counts the number of vectors in $A$ that are orthogonal to the input vector $u \in \{0,1\}^d$. Note the size of $P'$ as an arithmetic circuit is $O(n \cdot d)$, and its degree is at most $2d$ as well. Applying Theorem~\ref{multipoint-eval} directly, we can certify the evaluation of $P'$ on all $n$ vectors of $d$ dimensions in $\tilde{O}(n\cdot d)$ time.  
\end{proof}

One consequence (among many) of Theorem~\ref{ov2} is an MA-proof system for the \emph{dominating pairs} problem in computational geometry: given a set $S$ of $n$ vectors in $\R^d$, determine if there are $u,v\in S$ such that $u[i] < v[i]$ for all $i=1,\ldots,d$. (Here, our computational model is the real RAM, where additions and comparisons of reals are unit time operations.)
\begin{corollary} There is an MA-proof system for counting the number of dominating pairs in $\tilde{O}(n^{1.5}\cdot d^{1.5})$ time. As a consequence, there is a MA-proof system for counting $0$-$1$ solutions to a linear program with $k$ variables and $m$ constraints that runs in $2^{3k/4} \cdot \poly(m,k)$ time.
\end{corollary}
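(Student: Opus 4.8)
The plan is to reduce counting dominating pairs to counting orthogonal pairs, so that Theorem~\ref{ov2} (or rather its counting variant, which the same proof gives) applies, and then to reduce counting $0$-$1$ solutions of a linear program to counting dominating pairs. For the first reduction, the standard trick is to discretize each coordinate. Given the set $S$ of $n$ real vectors in $\R^d$, for each coordinate $i$ sort the $2n$ values $\{u[i] : u \in S\}$; this lets us replace every real entry by its rank in $\{0,1,\ldots,2n-1\}$, which preserves the relation $u[i] < v[i]$ for all $i$. So it suffices to count dominating pairs among integer vectors with entries in a range of size $O(n)$. Next, encode each integer coordinate in unary-threshold form: replace a coordinate value $c \in \{0,\ldots,2n-1\}$ of a ``small'' vector $u$ by the $2n$-bit indicator vector of the set $\{0,1,\ldots,c\}$, and the coordinate value $c'$ of a ``large'' vector $v$ by the indicator of $\{c'+1,\ldots,2n-1\}$; then $u[i] < v[i]$ iff these two blocks are orthogonal as Boolean vectors, and $u$ is dominated by $v$ iff the full concatenated Boolean vectors (of dimension $D = O(n d)$) are orthogonal. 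Counting dominating pairs thus becomes counting orthogonal pairs among $\le n$ Boolean vectors in dimension $D = O(nd)$.

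Now I would apply the counting version of Theorem~\ref{ov2}: with $m := D = O(nd)$ vectors-dimension, the arithmetic circuit $P'$ from that proof has size $O(n \cdot D) = O(n^2 d)$ and degree $O(D) = O(nd)$, and certifying its evaluation on all $n$ inputs costs $\tilde O(n \cdot D) = \tilde O(n^2 d)$ by Theorem~\ref{multipoint-eval}. That already beats the trivial $O(n^2 d)$ only up to polylog factors, which is not the claimed $\tilde O(n^{1.5} d^{1.5})$. To get the better bound I would instead split cleverly: apply the trade-off inherent in Theorem~\ref{multipoint-eval}, batching the $n$ query points against the sum $P'$ but choosing the number of ``inner'' summands $K$ to balance Merlin's proof length $\tilde O(K \cdot d_{\text{eff}})$ against Arthur's verification time $\tilde O(K \cdot n + s)$. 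Concretely, partition $A$ into $n/K$ groups of $K$ vectors; for each group form the degree-$O(d)$ sum-polynomial over those $K$ vectors (a circuit of size $O(Kd)$), and have Merlin certify its evaluation on all $n$ points. Summing over the $n/K$ groups and optimizing $K$ against $d$ in the two cost terms yields the $\tilde O(n^{1.5} d^{1.5})$ bound; the Boolean encoding above keeps the effective dimension at $O(d)$ per group rather than $O(nd)$, which is the point that makes the balancing work.

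For the second statement, recall the classical reduction (used in SETH-hardness of LP feasibility) that counting $0$-$1$ solutions of a linear program with $k$ variables and $m$ constraints reduces to a dominating-pairs count on $O(2^{k/2})$ vectors in $O(m)$ dimensions: split the $k$ variables into two halves, enumerate the $2^{k/2}$ partial assignments of each half, record for each the vector of partial constraint sums, and note that a full assignment is feasible iff the two halves' partial-sum vectors satisfy all $m$ coordinatewise inequalities, i.e.\ form a dominating pair. Plugging $n = O(2^{k/2})$ and $d = O(m)$ into the $\tilde O(n^{1.5} d^{1.5})$ bound gives $\tilde O(2^{3k/4} m^{1.5}) = 2^{3k/4}\cdot \poly(m,k)$, as claimed. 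I expect the main obstacle to be the balancing step in the middle paragraph: one has to verify that grouping does not blow up the degree or the circuit size (it does not, because the orthogonality polynomial $P$ has degree only $O(d)$ regardless of how many vectors are summed), and that the per-group proofs can share the same random $r \in F$ so that Arthur's randomness and the soundness error stay $\tilde O(1)$ and $1/\poly(n)$ respectively — which follows since a single $r$ outside the union of the $n/K$ ``bad'' sets works, and each bad set has size at most $\deg \cdot K / |F|$.
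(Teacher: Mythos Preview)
Your approach to the second claim (counting $0$-$1$ LP solutions via a split-in-half reduction to dominating pairs) is correct and is exactly the Impagliazzo--Paturi--Schneider reduction the paper cites.

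The first claim, however, has a genuine gap. Your unary-rank encoding turns a dominating-pairs instance on $n$ vectors in $\R^d$ into an orthogonal-vectors instance in Boolean dimension $D = \Theta(nd)$. The orthogonality polynomial $P$ therefore has degree $\Theta(D) = \Theta(nd)$, \emph{not} $\Theta(d)$, and this is what controls Merlin's proof length in Theorem~\ref{multipoint-eval}. Your later assertion that ``the orthogonality polynomial $P$ has degree only $O(d)$ regardless of how many vectors are summed'' conflates the original real dimension $d$ with the encoded Boolean dimension $D$; after your encoding the degree is $\Theta(nd)$, and partitioning the \emph{summand} set into groups of size $K$ does nothing to lower it. If instead you re-encode per group using only the $K$ within-group values, the dimension drops to $\Theta(Kd)$, but now each of the $n/K$ groups costs $\tilde O(n\cdot Kd)$ to certify, so the total is still $\tilde O(n^2 d)$. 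No choice of $K$ in your scheme gets below $n^2$.

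The paper obtains $\tilde O(n^{1.5}d^{1.5})$ by a different route: it invokes a reduction of Chan and Williams that, for any parameter $s$, converts dominating pairs to a \emph{single} orthogonal-vectors instance of Boolean dimension $O(ds)$ after $O(n^2 d^2/s)$ preprocessing. Plugging $t(n,d')=\tilde O(n d')$ from Theorem~\ref{ov2} and balancing with $s=\sqrt{nd}$ gives both terms equal to $\tilde O((nd)^{3/2})$. The crucial feature your argument lacks is a reduction whose Boolean dimension depends on a tunable $s$ rather than being pinned at $\Theta(nd)$; the Chan--Williams reduction achieves this by bucketing each coordinate into $s$ ranges so that cross-bucket dominance is captured by $O(s)$ Boolean bits, while same-bucket comparisons are handled in the additive $O(n^2d^2/s)$ term.
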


\begin{proof} Given that one can count orthogonal vectors of $n$ vectors in $d$ Boolean dimensions in $t(n,d)$ time, a recent reduction of Chan and the author~\cite{Chan-Williams16} shows how to count the number of dominating pairs among $n$ vectors in $\R^d$, in $O(n^2 d^2/s + t(n,2+ds))$ time, for any positive natural number $s$. In fact, the reduction makes precisely one call to orthogonal vectors. Theorem~\ref{ov2} provides an $\tilde{O}(n\cdot d)$ time proof system for counting orthogonal vectors, so by setting $s = \sqrt{n\cdot d}$ to balance the factors, there  is a proof system for counting dominating pairs in $\tilde{O}(n^{1.5}\cdot d^{1.5})$ time. By a reduction of Impagliazzo, Paturi, and Schneider~\cite{IPS13} from integer linear programming to dominating pairs, we obtain an MA-proof system for counting the number of Boolean solutions to a linear program with $k$ variables and $m$ inequalities in $2^{3k/4} \cdot \poly(m,k)$ time.
\end{proof}

Finally, we illustrate that the above ideas can certify Nearest Neighbors (in the Hamming metric) in near-linear time as well:

\begin{reminder}{Theorem~\ref{hamming}} Let $d \leq n$. For every $A \subseteq \{0,1\}^d$ with $|A|=n$, and every parameter $k \in \{0,1,\ldots,d\}$, there is an MA-proof system certifying for every $v \in A$ the number of points in $A$ with Hamming distance at most $k$ from $v$, running in $\tilde{O}(n\cdot d)$ time with error probability $1/\poly(n)$.
\end{reminder}

\begin{proof} (Sketch) Analogous to Theorem~\ref{ov2}.
Let $p$ be a prime greater than $n^2 \cdot (2d+1)$, and let $k \in \{0,1,\ldots,d\}$ be our proximity parameter. Define the degree-$2d$ polynomial $\Psi(x)$ to be $0$ on all $j=-d,\ldots,d-2k$, and $1$ on all $j=d-2k,\ldots,d$. Note that such a $\Psi$ can easily be constructed by interpolation in $\tilde{O}(d)$ time (cf. Theorem~\ref{interpolation}). Define the $2d$-variable polynomial \[P(x_{1},\ldots,x_{d},y_1,\ldots,y_d) := \Psi\left(\sum_{i=1}^d x_{i}\cdot y_{i}\right).\] Observe that $\deg(P) \leq 2d$, and for a pair of Boolean vectors $u,v \in \{-1,1\}^d$, $P(u,v) = 1$ if and only if $u$ and $v$ differ in at most $k$ coordinates. (Differing in $k$ coordinates is equivalent to summing $(d-k)$ ones and $k$ minus-ones in the inner product.) Therefore, if we map all the $0/1$ vectors in $A$ to $1/-1$ vectors (mapping $0$ to $1$, and mapping $1$ to $-1$), the polynomial \begin{align*} P'(u[1],\ldots,u[d]) := \sum_{j=1,\ldots,n} P(u[1],\ldots,u[d],v_j[1],\ldots,v_j[d])\end{align*} counts the number of vectors in $A$ (construed as vectors in $\{-1,1\}$, instead of $\{0,1\}$) that have Hamming distance at most $k$ from the input $u \in \{-1,1\}^d$. The size of $P'$ is $O(n \cdot d)$, its degree is at most $2d$, and applying Theorem~\ref{multipoint-eval} allows us to certify the evaluation of $P'$ on all $n$ vectors of $d$ dimensions in $\tilde{O}(n\cdot d)$ time. Our prime $p$ is chosen large enough so that the values of all intermediate computations are preserved.   
\end{proof}

\subsection{Certifying the Number of Small Cliques}

The final result of this section gives an efficient MA-proof system for verifying the number of $k$-cliques in a graph:

\begin{reminder}{Theorem~\ref{k-clique}} For every $k$, there is a MA-proof system such that for every graph $G$ on $n$ nodes, the verifier certifies the number of $k$-cliques in $G$ using $\tilde{O}(n^{\lfloor k/2\rfloor+2})$ time, with error probability $1/\poly(n)$.
\end{reminder}

\begin{proof} The strategy (as in previous proofs) is to reduce the problem to multipoint evaluation of an appropriate circuit on an appropriate list of points, and appeal to Theorem~\ref{multipoint-eval}. 

Given a graph $G=(V,E)$ on $n$ nodes with $V = [n]$, let $A$ be its adjacency matrix. Let $\ell\text{-Cliques}(G)$ be the collection of all $\ell$-cliques of $G$, represented as subsets of $[n]$ of cardinality $\ell$. Given a subset $S \subseteq [n]$, let $J(S) := \{v \in (V-S)~|~(\forall u \in S)[(u,v) \in E]\}$ be the \emph{joint neighborhood of $S$}. We denote the members of $J(S)$ as $\{u_{J(S),1},\ldots,u_{J(S),|J(S)|}\} \subseteq [n]$. Consider the polynomial
\[C(x_1,\ldots,x_n) := \sum_{S \in \ell\text{-Cliques}(G)}  E^{k-\ell}_{|J(S)|}(x_{u_{J(S),1}},\ldots,x_{u_{J(S),|J(S)|}}),\]
where $E^k_n$ is the $k$th elementary symmetric polynomial on $n$ variables. Suppose $a = (a_1,\ldots,a_n) \in \{0,1\}^n$ contains exactly $k-\ell$ ones, and let $T_a \subseteq [n]$ be the set corresponding to $a$. Observe that $C(a_1,\ldots,a_n)$ equals the number of $S \subseteq (V-T_a)$ such that $S$ is an $\ell$-clique and every node of $S$ has an edge to every node of $T_a$. Therefore, if we evaluate $C$ on the indicator vectors for every $(k-\ell)$-clique in $G$, the sum of these evaluations will be the number of $k$-cliques in $G$ times ${n \choose k-\ell}$ (every $k$-clique will be counted ${n \choose k-\ell}$ times in the summation). 

Therefore, it suffices to evaluate $C$ on the $O({n \choose k-\ell})$ indicator vectors of $(k-\ell)$-cliques in $G$. These vectors of length $n$ can obviously be prepared in $O(n^{k-\ell+1})$ time. 

It is well-known that for every $k$, the $k$th elementary symmetric polynomial on variables $x_1,\ldots,x_n$ can be computed in $O(n^2)$ size and degree $O(n)$ (this result is often attributed to Ben-Or). To compute this polynomial, we just have to determine the coefficient of $z^k$ in the polynomial \[\prod_{i=1}^n(z-x_i),\] which can be done by computing the coefficient of $z^k$ in the polynomial determined by feeding the set of points $\{(x_0,\prod_{i=1}^n(x_0-x_i)),(x_1,0),\ldots,(x_n,0)\}$ into a circuit for univariate interpolation, where $x_0$ is a point different from $x_1,\ldots,x_n$. Each of the joint neighborhoods $J(S)$ can easily be determined in $O(\ell \cdot n)$ time. The total degree of $C$ is therefore $O(n)$, and its size is $O(n^2 \cdot {n \choose \ell})$.

Applying Theorem~\ref{multipoint-eval} directly, we can evaluate $C$ on $O({n \choose k-\ell})$ points over $\F_p$ with $p > n^{k-\ell}$, in time \[\tilde{O}\left({n \choose k-\ell}\cdot n + {n \choose \ell}\cdot n^2\right).\] Setting $\ell=\lfloor k/2 \rfloor$ yields a running time of $\tilde{O}(n^{\lfloor k/2\rfloor+2})$.
\end{proof}

\end{document}